\newcommand{\step}[1]{\stackrel{#1}{\longrightarrow}}
\newcommand{\bisim}[2]{#1\mathbin{\underline{\leftrightarrow}}#2}
\newcommand{\teq}[2]{\mathrm{teq}\,(#1,#2)}
\newcommand{\congr}[2]{\mathrm{congr}\,(#1,#2)}
\newcommand{\nf}[1]{\mathrm{nf}\,(#1)}
\newcommand{\nfmult}[2]{\mathrm{nfmult}\,(#1,#2)}
\newcommand{\tail}[2]{\mathrm{tail}\,(#1,#2)}
\newcommand{\nextt}[1]{\mathrm{next}\,(#1)}
\newcommand{\obl}[3]{\mathrm{obl}\,(#1,#2,#3)}
\newcommand{\mult}[2]{\mathrm{mult}\,(#1,#2)}
\newcommand{\binstar}[2]{#1\mathbin{{}^{\scalebox{1.1}{$*$}}}\hspace{-2pt}#2}
\newtheorem{conjecture}{Conjecture}
\newtheorem{theorem}{Theorem}
\newtheorem{lemma}{Lemma}
\title{A Complete Axiom System for 1-Free Kleene Star
  Expressions under Bisimilarity: An Elementary Proof}
\author{Allan van Hulst}
\date{\texttt{allanvanhulst@protonmail.com}}
\begin{document}
\maketitle
\begin{abstract}
Grabmayer and Fokkink recently presented a finite and complete axiomatization
for 1-free process terms over the binary Kleene star under bismilarity equivalence 
(proceedings of LICS 2020, preprint available). A different and considerably simpler 
proof is detailed in this paper. This result, albeit still somewhat technical, only 
relies on induction and normal forms and is therefore also much closer to a potential 
rewriting algorithm.  In addition, a complete verification in the Coq proof assistant 
of all results in this work is provided, but correctness does not depend upon any 
computer-assisted methodology.
\end{abstract}
\section{Introduction}
\label{sec:introduction}
A completeness proof for a 1-free process theory modulo bisimilarity was recently 
presented in \cite{gf20}\footnote{Preprint available at: 
\texttt{https://www.cs.vu.nl/{\textasciitilde}wanf/publications.html}}\footnote{Extended 
version available at: \texttt{https://arxiv.org/abs/2004.12740}}. 
Being somewhat daunted by the complexity of this proof provided a powerful incentive 
to search for a simpler solution, which is presented in this work. This different and 
considerably simpler proof only uses induction and normal forms, and has been additionally 
verified by means of the Coq proof assistant.

This paper is an intermediate step in a research line towards the resolution of a question 
originally posed in \cite{mil84}: does there exist a finite and complete
axiomatization for the unary Kleene star under bisimilarity equivalence? This problem
is usually considered in the context of process theories including the constants 0
(deadlock), 1 (empty process) and the operators + (non-deterministic choice) and
$\cdot$ (sequential composition). Milner himself suggested that solving this problem
may involve a considerable effort \cite{mil84}. It is clear that the question remains 
unanswered to this day.

Earlier attempts in this direction include a completeness proof in absence of both
the constants 0 and 1 in \cite{bbp94}, and a variant where every Kleene star appears 
as $p^*\cdot 0$ in \cite{fok97}. Completeness proofs for simpler theories (e.g. 
without the Kleene star) can be found in any process algebra handbook (cf. \cite{jb10}).

The new approach in \cite{gf20} deviates from the rather syntactic treatment in earlier 
works and instead takes the more semantic avenue of using process charts. This toolset is 
applied to prove completeness for a 1-free process theory over the binary Kleene star 
modulo bisimilarity. While novel and innovative, this results in a quite complex proof.

This paper presents a two-fold approach. First, it is shown that every process term
has a bisimulant in a normal form and of non-increased star nesting depth. In essence,
our normalization requirement conforms to the following congruence property:
if term $p$ reduces to both $t$ and $u$ in one or more steps then bisimilarity of
$t$ and $u$ is a consequence of bisimilarity of $t\cdot\binstar{p}{q}$ and $u\cdot
\binstar{p}{q}$. The second part of the proof applies induction towards the star
nesting depth and proves equality under the condition that one operand is normalized,
which is sufficient due to symmetry and transitivity. 

This paper is set up in self-contained form. No claim is made that the supporting Coq 
code is the most neat or elegant reflection of such a proof in formalized mathematics, 
it just serves as an additional layer of verification.

The remainder of this paper is organized as follows. Section \ref{sec:definitions}
contains a number of basic definitions and section \ref{sec:soundness} concerns 
the soundness of the axiom system. These basic preliminaries are then followed
by three sections in which the the completeness proof is built from the ground up.
Section \ref{sec:summation} defines a summation operator and proves some basic
results. Normal forms of terms are defined and shown to be derivable under 
bisimilarity in section \ref{sec:normalization}. These results are then integrated
into a completeness proof in section \ref{sec:completeness}. The formalization of
the theory in the Coq proof assistant is detailed in section \ref{sec:formalization},
which mainly serves to make this material more accessible to readers who are less
familiar with such techniques. A short concluding section \ref{sec:conclusions}
refines Milner's completeness problem to a more detailed conjecture.
\section{Definitions}
\label{sec:definitions}
We will mostly follow standard nomenclature and definitions in process algebra and
propose the books \cite{jb10} and \cite{fok13} as general reference texts. Throughout 
this work, we will assume $A$ to be a set of \emph{actions}. There is no requirement that 
$A$ is finite, as this proof concerns the completeness of closed terms only. Actions form 
the elementary operations in the set of process terms $T$ defined inductively as
\begin{center}
\begin{math}
T\mathbin{::=}\,\,0\mid A\mid T+T\mid T\cdot T\mid\binstar{T}{T}.
\end{math}
\end{center}
In the process algebra $T$ the constant $0$ expresses deadlock (i.e.
a process exhibiting no behavior). Every action $a\in A$ induces a step 
$a$ to the special termination symbol $\surd$ as its sole behavior (note
that $\surd$ is not part of the algebra $T$). The operation $p+q$ 
models a (possibly non-deterministic) choice between $p$ and $q$, while the 
sequential composition $p\cdot q$ denotes the concatenation of the behaviors
of $p$ and $q$. The binary Kleene star $\binstar{p}{q}$ models zero or more 
iterations of $p$, possibly followed by $q$. Sequential composition is assumed 
to be right-associative, while the other operators associate to the left. The 
Kleene star binds stronger than sequential composition, which in turn binds 
stronger than plus.

Process terms exhibit behavior defined as taking actions resulting in
either a new process term or the termination symbol $\surd$. We set 
$V=T\cup\{\surd\}$ and define a relation $\longrightarrow
\subseteq V\times A\times V$ to formally capture this behavior. Assume that 
$p,q,p',q'\in T$, $v\in V$, and $a\in A$ in the set of derivation rules listed 
below, using the notation $p\step{a}q$ for $(p,a,q)\in\longrightarrow$.
\begin{center}
\begin{tabular}{c}
$\displaystyle\frac{}{a\step{a}\surd}$
\qquad
$\displaystyle\frac{p\step{a}v}{p+q\step{a}v}$
\qquad
$\displaystyle\frac{q\step{a}v}{p+q\step{a}v}$
\qquad
$\displaystyle\frac{p\step{a}p'}{p\cdot q\step{a}p'\cdot q}$ 
\bigskip \\
$\displaystyle\frac{p\step{a}\surd}{p\cdot q\step{a}q}$
\qquad
$\displaystyle\frac{p\step{a}p'}{\binstar{p}{q}\step{a}p'\cdot\binstar{p}{q}}$
\qquad
$\displaystyle\frac{p\step{a}\surd}{\binstar{p}{q}\step{a}\binstar{p}{q}}$
\qquad
$\displaystyle\frac{q\step{a}v}{\binstar{p}{q}\step{a}v}$ 
\bigskip \\
\end{tabular}
\end{center}
Bisimilarity is a coinductively defined relationship which relates process 
terms in a lock-step fashion. The proof in \cite{gf20} defines bisimilarity
in terms of process charts. As these constructs are not used in this
proof, we will employ a definition in terms of $V$. 

Elements $u,v\in V$ are bisimilar (notation $\bisim{u}{v}$) if 
there exists a relation $R\subseteq V\times V$ such that 
$(u,v)\in R$ and for all $(x,y)\in R$ the following are satisfied
\begin{enumerate}
\item $x = \surd$ if and only if $y = \surd$,
\item for all $x\step{a}x'$ there exists a $y'\in V$ such that
      $y\step{a}y'$ and $(x',y')\in R$, and
\item for all $y\step{a}y'$ there exists an $x'\in V$ such that
      $x\step{a}x'$ and $(x',y')\in R$.
\end{enumerate}
Examples of pairs of bisimilar terms include $\bisim{\binstar{(\binstar{a}{b})}{c}}{\binstar{(a+b)}{c}}$ 
and $\bisim{\binstar{(a+aa)}{0}}{\binstar{a}{0}}$, but the terms
$a\cdot b+a\cdot c$ and $a\cdot(b+c)$ are not bisimilar.

The following set of axioms will be shown to be sound and complete
with regard to bisimilarity in this paper. In a very strict context,
these should be interpreted as axiom-schemes, in the sense that for 
each closed instance of the variables, the corresponding axiom is 
defined.
\begin{center}
\begin{tabular}{lrcl|llrcl}
(B1) & $x+y$ & = & $y+x$ & & (B6) & $x+0$ & = & $x$ \\
(B2) & $(x+y)+z$ & = & $x+(y+z)$ & & (B7) & $0\cdot x$ & = & $0$ \\
(B3) & $x+x$ & = & $x$ & & (BKS1) & $x\cdot\binstar{x}{y}+y$ & = & $\binstar{x}{y}$ \\
(B4) & $(x+y)\cdot z$ & = & $x\cdot z+y\cdot z$ &
 & (BKS2) & $(\binstar{x}{y})\cdot z$ & = & $\binstar{x}{(y\cdot z)}$ \\
(B5) & $(x\cdot y)\cdot z$ & = & $x\cdot(y\cdot z)$ & 
& (RSP) & $x$ & = & $y\cdot x+z$\,\,\,\,\textrm{implies} \\
 & & & & & & $x$ & = & $\binstar{y}{z}$ \\
\end{tabular}
\end{center}
The axiomatization is a straightforward adaptation of the set of 
axioms originally proposed in \cite{mil84}, which was in turn 
adapted from a language-theoretic setting based on work by Salomaa
\cite{sal66}). It is well-known that at least one higher-order 
construct is required as shown in \cite{sew94}. Variants of
the axiom of the recursive specification principle (RSP) have been
studied extensively (cf. \cite{jb10}).

For the remainder of this paper, the notation $p=q$ will be used to
denote axiomatic equality whereas $p\equiv q$ will be used to denote
exact syntactic equality (e.g. $a+(b+c)=(a+b)+c$ but 
$a+(b+c)\not\equiv(a+b)+c$).

The definitions listed here are sufficient to formulate the main 
result of this work, the proof of which is divided over several succeeding
sections.
\begin{theorem}
For all $p,q\in T$ such that $\bisim{p}{q}$ it holds that $p=q$.
\end{theorem}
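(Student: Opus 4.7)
I will follow the road map outlined in the introduction. By symmetry of bisimilarity and transitivity of axiomatic equality, it suffices to prove the intermediate statement: if $\bisim{p}{q}$ and $q$ has been put into a \emph{normal form}, then $p = q$ is derivable. Given this, arbitrary bisimilar terms $p,q$ are normalized to $\nf{p}$ and $\nf{q}$ (both bisimilar to their originals and axiomatically equal to them), and the intermediate statement applied to $\bisim{\nf{p}}{\nf{q}}$ yields $\nf{p} = \nf{q}$, whence $p = \nf{p} = \nf{q} = q$.

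\emph{Normalization.} For each $p \in T$ I would build a bisimilar normal form $\nf{p}$, provably equal to $p$, whose star nesting depth does not exceed that of $p$. The defining property is the congruence condition sketched in the introduction: distinct multi-step residuals $t,u$ reachable within a normalized star body should satisfy $\bisim{t}{u}$ whenever the composite residuals $t \cdot \binstar{p}{q}$ and $u \cdot \binstar{p}{q}$ are bisimilar. This condition is exactly what is needed to license, in the completeness step below, an application of (RSP) to identify two solutions of the same recursive specification.

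\emph{Induction on star nesting depth, and expected obstacle.} With normalization in hand, I would induct on the star nesting depth of the normalized $q$. In the base case $q$ is star-free; the axioms (B1)--(B7) together with an auxiliary induction on the size of $q$ collapse any bisimilar $p$ onto $q$ by matching initial actions and their residuals. In the inductive case, a top-level star subterm $\binstar{q_0}{q_1}$ of $q$ satisfies, by (BKS1), the recursive equation $x = q_0 \cdot x + q_1$; the congruence property of the normal form, combined with the induction hypothesis applied at strictly smaller star depth, lets one rewrite the corresponding substructure of $p$ as a solution of the same specification, after which (RSP) forces $p = q$. The genuine difficulty I expect is the construction and justification of the normalization procedure itself: defining $\nf{p}$ so that the congruence property actually holds, while simultaneously keeping $p = \nf{p}$ derivable and the star nesting depth non-increasing. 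The summation operator of section \ref{sec:summation} will have to be deployed to manipulate branching structure and merge equivalent subcomputations, and this is where the bulk of the real work will lie; once the normal form behaves correctly, the final completeness induction should be comparatively routine.
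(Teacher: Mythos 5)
Your plan follows the same architecture as the paper: obtain a bisimilar normal form of non-increased star nesting depth satisfying the congruence property, then induct on star depth and use (RSP) together with that congruence property to identify the two sides as solutions of one recursive specification, with the summation operator mediating between semantic steps and axiomatic sums. One correction to the division of labour, however. You ask the normalization to produce a term that is both bisimilar to and \emph{provably equal} to the original, so that you can chain $p=\nf{p}=\nf{q}=q$. The paper's normalization lemma delivers only bisimilarity (plus the depth bound); demanding derivable equality at that stage is at best redundant and at worst circular, since establishing $p=\nf{p}$ for the transformations the normalization actually performs (e.g.\ replacing $\binstar{(a\cdot\binstar{a}{a})}{0}$ by $\binstar{(a+a)}{0}$) is an instance of the very completeness statement you are in the middle of proving at that star depth. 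The paper instead proves a one-sided lemma --- if $\bisim{p}{q}$ and $\nf{p}$ holds then $p=q$, under the induction hypothesis on star depth --- and applies it \emph{twice}: taking a single normal form $r$ with $\bisim{p}{r}$, $\nf{r}$ and $d(r)\leq d(p)$, it derives $r=p$ and $r=q$ separately and concludes by transitivity, so neither $q$ is normalized nor is any axiomatic equality extracted from the normalization itself. With that adjustment your outline coincides with the paper's proof; your assessment that the real work lies in constructing the normal form (the congruence property, the depth bound, and the case analysis around star bodies) is accurate.
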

\section{Soundness}
\label{sec:soundness}
We briefly consider soundness of the theory to fulfill the objective 
of being self-contained and to state a simple lemma which serves as a 
very useful building block in the remainder of the proof. 
\begin{lemma}
\label{lem:bisim_next}
For all $p,q\in T$ such that
\begin{enumerate}
\item for all $p\step{a}p'$ there exists a $q\step{a}q'$ such that $\bisim{p'}{q'}$ and
\item for all $q\step{a}q'$ there exists a $p\step{a}p'$ such that $\bisim{p'}{q'}$,
\end{enumerate}
it holds that $\bisim{p}{q}$.
\end{lemma}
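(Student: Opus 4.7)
The plan is to construct an explicit bisimulation relation $R \subseteq V \times V$ containing the pair $(p,q)$, by taking the union of $\{(p,q)\}$ with the union of all existing bisimulations. Concretely, set
\[
R \;=\; \{(p,q)\} \,\cup\, \{(x,y) \in V\times V : \bisim{x}{y}\}.
\]
It then remains to verify that $R$ satisfies the three clauses in the definition of bisimilarity, after which $(p,q) \in R$ yields $\bisim{p}{q}$ directly.

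For the pair $(p,q)$ itself the termination clause is vacuous, since $p, q \in T$ implies that neither is $\surd$. The transfer clauses for $(p,q)$ are exactly the content of hypotheses 1 and 2: whenever $p \step{a} p'$, the matching $q \step{a} q'$ with $\bisim{p'}{q'}$ is guaranteed, and the pair $(p',q')$ then lies in the second component of $R$; the symmetric direction is identical.

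For any other pair $(x,y) \in R$ we have $\bisim{x}{y}$ by construction, so there is some witnessing bisimulation $S$ with $(x,y) \in S$. The first bisimulation clause for $(x,y)$ then follows from $S$ directly, and if $x \step{a} x'$ then $S$ supplies $y \step{a} y'$ with $(x',y') \in S$, which in particular witnesses $\bisim{x'}{y'}$ and so places $(x',y')$ in $R$; the reverse direction is symmetric.

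The only real subtlety is bookkeeping around the termination symbol: when $p \step{a} \surd$, hypothesis 1 gives $q \step{a} q'$ with $\bisim{\surd}{q'}$, which forces $q' = \surd$ by clause 1 of bisimilarity, so the pair $(\surd,\surd)$ landing in $R$ is harmless and no separate case analysis is needed. Beyond that, the argument is purely a matter of closing $\{(p,q)\}$ under existing bisimilarity, so I do not anticipate any significant obstacle.
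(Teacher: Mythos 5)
Your construction is correct: the paper states Lemma \ref{lem:bisim_next} without any proof (deferring to the Coq development), and adjoining $(p,q)$ to the full bisimilarity relation --- which is itself a bisimulation, being the union of all witnessing relations --- is exactly the standard argument one would supply. Your handling of the $\surd$ successors (forcing $q'=\surd$ via clause 1 when $p\step{a}\surd$) is the only delicate point and you address it correctly, so there is nothing to add.
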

Soundness of the axiom RSP is neither deep nor straightforward.
\begin{lemma}
\label{lem:rsp_sound}
If $\bisim{p}{q\cdot p+r}$ then $\bisim{p}{\binstar{q}{r}}$, for all $p,q,r\in T$.
\end{lemma}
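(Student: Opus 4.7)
The plan is to construct an explicit bisimulation $R\subseteq V\times V$ containing the pair $(p,\binstar{q}{r})$. Guided by the SOS rules for $\binstar{q}{r}$, the states one reaches on the right after a step have three shapes: a ``tail'' $q'\cdot\binstar{q}{r}$ (when $q$ took a non-terminating step), $\binstar{q}{r}$ itself (when $q$ terminated in one action), or a residue of $r$. Under the hypothesis $\bisim{p}{q\cdot p+r}$, the matching left-hand states ought respectively to be bisimilar to $q'\cdot p$, to $p$, or to the same residue of $r$. Accordingly, I would define
\[
R \;=\; \{(u,v)\in V\times V : \bisim{u}{v}\}\;\cup\;S_1\;\cup\;S_2,
\]
where $S_1=\{(x,\binstar{q}{r}) : x\in T,\ \bisim{x}{p}\}$ and $S_2=\{(x,q'\cdot\binstar{q}{r}) : x,q'\in T,\ \bisim{x}{q'\cdot p}\}$.

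The pair $(p,\binstar{q}{r})$ lies in $S_1$ via $\bisim{p}{p}$, and the $\surd$-clause of a bisimulation is immediate because every right-hand component of a pair in $S_1\cup S_2$ is a term in $T$. The bulk of the proof is then to verify the transfer property for pairs in $S_1\cup S_2$. For each such pair, one inspects the possible SOS-derivations of steps $y\step{a}y'$ on the right and $x\step{a}x'$ on the left: each step of $\binstar{q}{r}$ or $q'\cdot\binstar{q}{r}$ either originates in $q$ (resp.\ $q'$), continuing or terminating it, or originates in $r$. In every subcase the residual right-hand state has one of the three shapes above, so by combining the given bisimilarity on $x$ with $\bisim{p}{q\cdot p+r}$ one obtains a matching step on the left landing in the correct component of $R$. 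The symmetric direction is the same analysis run backwards, starting from a step of $y$ and descending through $\bisim{p}{q\cdot p+r}$ and then through $\bisim{x}{p}$ (or $\bisim{x}{q'\cdot p}$).

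The main subtlety lies in the bookkeeping rather than in any conceptual difficulty. After chaining $\bisim{x}{p}$ (or $\bisim{x}{q'\cdot p}$) with $\bisim{p}{q\cdot p+r}$, one must distinguish whether the matching step of $q\cdot p+r$ originates in the $q$-summand or the $r$-summand, and whether it produces a term or $\surd$: a step of the $r$-summand must land in the bisimilarity component of $R$ (with the $\surd$-subcase handled by property~1 of bisimulation forcing both sides to be $\surd$), a terminating step of $q$ or $q'$ lands in $S_1$, and a continuing step of $q$ or $q'$ lands in $S_2$. No individual case is hard, but there are roughly half a dozen per direction, so the risk is in overlooking one rather than in any deep obstacle; once the case analysis is complete, Lemma~\ref{lem:bisim_next}-style reasoning wraps up the argument.
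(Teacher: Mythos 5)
Your construction is essentially the paper's own: the paper builds the witnessing relation as $\{(p,\binstar{q}{r})\}\cup\overline{R}\cup\{(p',q'\cdot\binstar{q}{r})\mid(p',q'\cdot p)\in\overline{R}\}\cup\{(p',\binstar{q}{r})\mid(p',p)\in\overline{R}\}$, where $\overline{R}$ is the transitive closure of a witness for $\bisim{p}{q\cdot p+r}$, and your $S_1$, $S_2$ and bisimilarity component correspond exactly to these pieces (you simply use the maximal bisimulation in place of $\overline{R}$, which is a harmless simplification). The proposal and its case analysis are correct.
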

\begin{proof}
Assume $R$ is the witnessing relation for $\bisim{p}{q\cdot p+r}$. It is
straightforward to show that the transitive closure $\overline{R}$
of $R$ is again a bisimulation. If $R'$ is defined as:
\begin{center}
\begin{math}
R'=\{(p,\binstar{q}{r})\}\cup\overline{R}\cup\{(p',q'\cdot\binstar{q}{r})
  \mid(p',q'\cdot p)\in\overline{R}\}\cup\{(p',\binstar{q}{r})\mid(p',p)\in\overline{R}\}
\end{math}
\end{center}
then $R'$ can be chosen as a witnessing relation for $\bisim{p}{\binstar{q}{r}}$.
\end{proof}
Soundness of the theory is required as a lemma in the completeness proof.
\begin{lemma}
\label{lem:soundness}
For all $p,q\in T$ such that $p=q$ it holds that $\bisim{p}{q}$.
\end{lemma}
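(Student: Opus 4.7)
The plan is to proceed by induction on the derivation of $p = q$ in the equational theory. This reduces the task to two families of obligations: the structural obligations that $\underline{\leftrightarrow}$ is an equivalence relation and a congruence with respect to $+$, $\cdot$ and $\binstar{\cdot}{\cdot}$; and the soundness of each individual axiom.

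For the structural obligations, reflexivity, symmetry and transitivity of $\underline{\leftrightarrow}$ are immediate from the coinductive definition, with transitivity obtained by composing two witnessing relations (exactly the construction already used implicitly in the proof of Lemma~\ref{lem:rsp_sound}). Each of the three congruence laws is proved by exhibiting an explicit witnessing relation built from the given bisimulations: for $+$ one augments the union of the two witnesses with the pair in question; for $\cdot$ one closes under the pattern $(p_1'\cdot q_1, p_2'\cdot q_2)$; for $\binstar{\cdot}{\cdot}$ one additionally closes under $(p_1'\cdot\binstar{p_1}{q_1},\, p_2'\cdot\binstar{p_2}{q_2})$ so as to mirror the operational rule that fires a non-terminating step of the body.

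For the individual axioms, I would apply Lemma~\ref{lem:bisim_next} and verify the one-step obligations directly from the derivation rules. The cases (B1)--(B7) are essentially trivial. For (BKS1), any $\step{a}$-transition of $x\cdot\binstar{x}{y}+y$ arises either from a step of $x$ (producing $x'\cdot\binstar{x}{y}$ or $\binstar{x}{y}$) or from a step of $y$, which matches exactly the transitions available from $\binstar{x}{y}$ by its operational rules, and vice versa. For (BKS2), one similarly checks that $(\binstar{x}{y})\cdot z$ and $\binstar{x}{(y\cdot z)}$ have matching one-step behaviors, invoking the congruence of $\cdot$ (already established) to close the diagrams on the successor states. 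Soundness of (RSP) is provided directly by Lemma~\ref{lem:rsp_sound}.

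The main obstacle is bookkeeping rather than conceptual depth: the case analysis for the congruence of $\binstar{\cdot}{\cdot}$ and for axiom (BKS2) requires enumerating all applicable operational rules and pairing up the resulting successor states, but every subcase closes by an immediate appeal to the induction hypothesis on the equational derivation or to the definition of $\underline{\leftrightarrow}$. Nothing here relies on the later sections of the paper, so this lemma is genuinely a preliminary that can be dispatched once and reused freely in the completeness argument.
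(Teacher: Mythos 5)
Your proposal follows essentially the same route as the paper: induction on the derivation tree of $p=q$, discharging most axioms via Lemma~\ref{lem:bisim_next} and delegating RSP to Lemma~\ref{lem:rsp_sound}, with explicit witnessing relations for the congruence rules. One caveat: you file B5 under the ``essentially trivial'' cases B1--B7, but it is the one member of that group for which a single appeal to Lemma~\ref{lem:bisim_next} does not close --- matching the successors $(x'\cdot y)\cdot z$ and $x'\cdot(y\cdot z)$ requires soundness of B5 itself (another axiom leaf, not a smaller derivation), so you need an explicit relation closed under the pattern $((u\cdot y)\cdot z,\,u\cdot(y\cdot z))$, exactly as you already do for the congruence of $\cdot$ and for BKS2; the paper correspondingly singles out B5 together with BKS2 as the ``slightly more complicated'' axioms.
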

\begin{proof}
Assume that $p=q$ and apply induction towards the derivation tree of $p=q$. 
Most of the cases can be resolved easily via Lemma \ref{lem:bisim_next}.
Soundness for the axioms B5 and BKS2 is only slightly more complicated.
Soundness for the axiom RSP is shown in Lemma \ref{lem:rsp_sound}.
\end{proof}
\section{Summation}
\label{sec:summation}
Expressing process terms as sums forms a crucial step between algebraic
and semantic reasoning in the succeeding proofs. For example, the proof
of Lemma \ref{lem:next} becomes much easier once we are able to obtain
such sums under equality. 

For finite sets $N\subseteq A \times V$ we define the summation operator 
$\sigma:\{N\subseteq A\times V\}\rightarrow T$ recursively by setting 
$\sigma(\emptyset)=0$ and
\begin{center}
\begin{math}
\sigma(\{(a,u)\}\cup N)=\sigma(N)+\begin{cases}
                                  a & \textrm{if}\,\,u\,\,\textrm{equals}\,\,\surd \\
                                  a\cdot u & \textrm{if}\,\,u\in T \\
                                  \end{cases}
\end{math}
\end{center}
where the expression $\{(a,u)\}\cup N$ refers to the disjoint union
(i.e. $(a,u)\not\in N$). 
\begin{lemma}
\label{lem:summation}
For all $p\in T$ there exists an $N\subseteq A\times V$
such that $p=\sigma(N)$ and for all $(a,u)\in A\times V$ it holds
that $p\step{a}u$ if and only if $(a,u)\in N$.
\end{lemma}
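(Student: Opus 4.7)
I would prove this by structural induction on $p$, simultaneously proving an auxiliary distribution lemma for the summation operator.

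The auxiliary observation is that for any finite $N_1,N_2\subseteq A\times V$, axiom equality gives $\sigma(N_1)+\sigma(N_2)=\sigma(N_1\cup N_2)$. This follows by induction on (say) $N_1$: the base case $N_1=\emptyset$ is just B6, and the inductive step uses B1, B2 to reshuffle the added summand to the front and B3 to absorb a duplicate whenever the new element is already in $N_2$. An analogous distribution identity is that $\sigma(N)\cdot r$ equals $\sigma(N^{\cdot r})$, where $N^{\cdot r}$ replaces each $(a,\surd)\in N$ by $(a,r)$ and each $(a,u)\in N$ with $u\in T$ by $(a,u\cdot r)$; this follows by induction on $N$ using B4, B5 (and B7 to dispose of $0\cdot r$ in the base case). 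Both identities are routine but should be isolated before the main induction.

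For the main induction, the base cases are immediate: $p\equiv 0$ is witnessed by $N=\emptyset$, and $p\equiv a$ by $N=\{(a,\surd)\}$ (with a single appeal to B6 to absorb the leading $0$). For $p\equiv p_1+p_2$, apply the induction hypothesis to obtain $N_1,N_2$, set $N=N_1\cup N_2$, and invoke the sum distribution identity; the transition characterization follows from the SOS rules for $+$. For $p\equiv p_1\cdot p_2$, apply the induction hypothesis to $p_1$ to get $N_1$ and take $N=N_1^{\cdot p_2}$ via the product distribution identity. The transition characterization is just a restatement of the two SOS rules for sequential composition, which distinguish $p_1\step{a}\surd$ from $p_1\step{a}p_1'$.

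The Kleene star case $p\equiv\binstar{p_1}{p_2}$ is the most delicate. Apply BKS1 to rewrite $\binstar{p_1}{p_2}=p_1\cdot\binstar{p_1}{p_2}+p_2$, obtain $N_1,N_2$ from the induction hypothesis for $p_1,p_2$, and then use the two distribution identities to rewrite the right-hand side as $\sigma(N_1^{\cdot\binstar{p_1}{p_2}}\cup N_2)$. Inspection of the four SOS rules for the binary Kleene star confirms that this set is exactly the set of outgoing transitions of $\binstar{p_1}{p_2}$.

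The main obstacle, I expect, is not any individual case but rather keeping the bookkeeping honest when a pair $(a,u)$ appears in more than one of the sets being unioned — for example, a transition from $p_1$ through $\binstar{p_1}{p_2}$ might syntactically coincide with a transition from $p_2$. The clean way to handle this is to let the sum distribution identity absorb duplicates via B3, so that the statement $p=\sigma(N)$ is proved with $N$ simply taken to be the \emph{set} of outgoing transitions, with no requirement on how it was assembled.
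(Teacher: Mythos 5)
Your proposal is correct and follows essentially the same route as the paper: structural induction on $p$, the union identity $\sigma(N_1)+\sigma(N_2)=\sigma(N_1\cup N_2)$ for the sum case, the projection of $N_1$ through $\cdot\,p_2$ (via B5) for sequential composition, and BKS1 to reduce the star case to the previous two. You are merely more explicit than the paper about isolating the two distribution identities and about absorbing duplicate pairs with B3, which is a reasonable elaboration rather than a different argument.
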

\begin{proof}
Apply induction towards the structure of $p$. In case $p\equiv 0$, choose $\emptyset$
for $N$. If $p\equiv a$ for some $a\in A$, choose $\{(a,\surd)\}$ for $N$. For the case 
$p\equiv p_1+p_2$, observe that in general $\sigma(N_1\cup N_2)=\sigma(N_1)+\sigma(N_2)$
always holds.  

If $p\equiv p_1\cdot p_2$, first use induction to obtain $N_1$ such that 
$p_1=\sigma(N_1)$. Then, apply the following projection to each element
$(a,u)\in N_1$: (1) if $u$ equals $\surd$ then project to $(a,p_2)$, (2) 
if $u\in T$ then project to $(a,u\cdot p_2)$ and name the result $N'$. 
Subsequently, we have $p_1\cdot p_2=\sigma(N')$ using B5. 

For the case $p\equiv\binstar{p_1}{p_2}$, use BKS1 to rewrite as $p_1\cdot
\binstar{p_1}{p_2}+p_2$ and treat the part $p_1\cdot\binstar{p_1}{p_2}$ in
the same way as for $p\equiv p_1\cdot p_2$.
\end{proof}
We apply Lemma \ref{lem:summation} to obtain a useful intermediate result 
stated in Lemma \ref{lem:next}, which is very similar in form to Lemma 
\ref{lem:bisim_next}. We define the predicate $\teq{u}{v}$ for all $u,v\in 
V$ as shown below, where $\mathrm{teq}$ stands for termination-or-(axiomatically)equal.
\begin{center}
\begin{math}
\teq{\surd}{\surd}=\mathit{true}\qquad
\teq{p}{q}\iff p=q\,\,\textrm{if}\,\,p,q\in T
\end{math}
\end{center}
and define $\teq{u}{v}$ as $\mathit{false}$ in all other cases.
\begin{lemma}
\label{lem:next}
For all $p,q\in T$ such that
\begin{enumerate}
\item for all $p\step{a}p'$ there exists a $q\step{a}q'$ such that $\teq{p'}{q'}$ and
\item for all $q\step{a}q'$ there exists a $p\step{a}p'$ such that $\teq{p'}{q'}$,
\end{enumerate}
it holds that $p=q$.
\end{lemma}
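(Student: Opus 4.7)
The plan is to apply Lemma \ref{lem:summation} to both $p$ and $q$, obtaining finite sets $N_p,N_q\subseteq A\times V$ such that $p=\sigma(N_p)$, $q=\sigma(N_q)$, and such that $(a,u)\in N_p$ exactly when $p\step{a}u$, and similarly for $N_q$. Under this translation, the two hypotheses become purely combinatorial statements about $N_p$ and $N_q$: every element of $N_p$ has, in $N_q$, a partner sharing the same action label and a $\mathrm{teq}$-related right-hand component, and symmetrically.

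The heart of the proof is then an auxiliary absorption claim: if $N,M\subseteq A\times V$ are finite and every $(a,v)\in M$ admits some $(a,u)\in N$ with $\teq{u}{v}$, then $\sigma(N)=\sigma(N)+\sigma(M)$. I would establish this by induction on $|M|$. The base case $M=\emptyset$ is just B6. For the inductive step, I would peel off a pair $(a,v)\in M$, locate its matching $(a,u)\in N$, and observe that the two pairs contribute identical summands: either $u=v=\surd$ and both contribute the term $a$, or $u,v\in T$ with $u=v$ axiomatically and both contribute sequential compositions related by congruence. Using B1 and B2 to shuffle the sum so that the duplicate summand sits adjacent to its twin, a single application of B3 absorbs it, and the induction hypothesis discharges the rest.

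Once this claim is in hand, applying it in both directions yields $\sigma(N_p)=\sigma(N_p)+\sigma(N_q)$ and $\sigma(N_q)=\sigma(N_q)+\sigma(N_p)$, from which $p=q$ follows by a single application of B1. The bulk of the work is not conceptual but notational: since $\sigma$ is defined by a recursion that peels one element off a set, one really wants the preliminary structural identities $\sigma(N_1\cup N_2)=\sigma(N_1)+\sigma(N_2)$ (for disjoint $N_1,N_2$) and independence of $\sigma$ from the order in which elements are peeled, both provable from B1, B2, and B6 alone. The main obstacle I anticipate is precisely this bookkeeping — making sure that the rearrangement-by-AC and the absorption-by-B3 interact cleanly with the set-based recursion defining $\sigma$ — rather than any subtle use of the Kleene-star axioms, which do not appear in this argument at all.
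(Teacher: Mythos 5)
Your proposal is correct and follows essentially the same route as the paper: both apply Lemma \ref{lem:summation} to write $p=\sigma(M)$ and $q=\sigma(N)$, reduce to the chain $\sigma(M)=\sigma(M)+\sigma(N)=\sigma(N)$, and discharge each absorption identity by induction on the size of the set occurring once, matching summands via $\mathrm{teq}$ and collapsing duplicates with B1--B3 and B6. Your write-up merely spells out the bookkeeping that the paper leaves implicit.
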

\begin{proof}
Assume $p=\sigma(M)$ and $q=\sigma(N)$ for $M,N\subseteq A\times V$
as derived by Lemma \ref{lem:summation}. Rewrite as $p=\sigma(M)=\sigma(M)+\sigma(N)=
\sigma(N)=q$ and solve the two equalities $\sigma(M)=\sigma(M)+\sigma(N)$ and 
$\sigma(N)=\sigma(M)+\sigma(N)$ separately by induction towards the size of the
set that appears once in the respective equality.
\end{proof}
\section{Normalization}
\label{sec:normalization}
In the remainder, let $p\longrightarrow^*q$ denote the fact that $p$ reduces to
$q$ in zero or more steps for $q\in T$. Similarly, we define $p\longrightarrow^+q$
representing a reduction in one or more steps for $q\in T$. 

The core of this completeness proof uses the result that terms in $T$ can
be normalized under bisimilarity such that for every subterm $\binstar{r}{s}$ and
all reductions $r\longrightarrow^+p$ and $r\longrightarrow^+q$ such that 
$\bisim{p\cdot\binstar{r}{s}}{q\cdot\binstar{r}{s}}$ it holds that $\bisim{p}{q}$.
The precise meaning of 'subterm' will become more clear in a short while. We require 
a predicate $\mathrm{congr}$ to express (a premise for) this congruence property:
\begin{center}
\begin{math}
\congr{p}{q}\iff\textrm{for all}\,\,p\longrightarrow^+t:\,\,\bisim{t\cdot q}{q}
  \,\,\textrm{does not hold}.
\end{math}
\end{center}
Informally, if there exists a $p\longrightarrow^+t$ such that 
$\bisim{t\cdot q}{q}$ then $\congr{p}{q}$ is false, and true if no such 
$p\longrightarrow^+t$ exists. We may now prove a crucial lemma.
\begin{lemma}
\label{lem:congruence}
If $\bisim{p\cdot r}{q\cdot r}$ and $\congr{p}{r}$ and $\congr{q}{r}$
then $\bisim{p}{q}$, for all $p,q,r\in T$.
\end{lemma}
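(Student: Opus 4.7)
The plan is to exhibit a bisimulation between $p$ and $q$ by using the hypothesised witness for $\bisim{p\cdot r}{q\cdot r}$ and exploiting the two congruence premises to exclude the problematic ``cross-matching'' transitions. Concretely, I would set
\begin{center}
\begin{math}
R'=\{(x,y)\in T\times T\mid\congr{x}{r},\ \congr{y}{r},\ \bisim{x\cdot r}{y\cdot r}\}
\end{math}
\end{center}
and verify that $R'$ is a bisimulation on $V$; this suffices because $(p,q)\in R'$ by hypothesis, and the termination clause of the bisimulation definition is trivial since both components of any pair in $R'$ lie in $T$.

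The substantial work is the step-matching check. Suppose $(x,y)\in R'$ and $x\step{a}x'$. Then $x\cdot r\step{a}v$, where $v\equiv r$ if $x'\equiv\surd$ and $v\equiv x'\cdot r$ if $x'\in T$. The bisimilarity of $x\cdot r$ and $y\cdot r$ delivers a matching move $y\cdot r\step{a}w$, which by inspection of the operational rules for sequential composition must originate either from a transition $y\step{a}y'$ with $y'\in T$ and $w\equiv y'\cdot r$, or from $y\step{a}\surd$ with $w\equiv r$. I would then analyse the four resulting combinations.

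The two ``cross'' combinations are precisely what the congruence premises rule out. If $x\step{a}\surd$ but the matching move of $y\cdot r$ comes from $y\step{a}y'$ with $y'\in T$, then $\bisim{y'\cdot r}{r}$ together with $y\longrightarrow^{+}y'$ contradicts $\congr{y}{r}$; symmetrically, if $x\step{a}x'$ with $x'\in T$ but the matching move comes from $y\step{a}\surd$, then $\bisim{x'\cdot r}{r}$ and $x\longrightarrow^{+}x'$ contradict $\congr{x}{r}$. The two ``aligned'' combinations give either $x\step{a}\surd$ and $y\step{a}\surd$ (which needs no follow-up), or $x\step{a}x'$ and $y\step{a}y'$ with $\bisim{x'\cdot r}{y'\cdot r}$; in the latter case $\congr{x}{r}$ entails $\congr{x'}{r}$, because any reduction $x'\longrightarrow^{+}t$ lifts via $x\step{a}x'$ to $x\longrightarrow^{+}t$, and similarly for $\congr{y'}{r}$, so $(x',y')\in R'$ as required. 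The symmetric argument for steps of $y$ is identical.

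The only genuine obstacle is keeping the case analysis organised and noticing that the two congruence hypotheses are engineered precisely to eliminate the cross-cases; everything else is a mechanical unfolding of the bisimulation definition and the operational rules for sequential composition.
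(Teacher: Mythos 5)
Your proposal is correct and follows essentially the same route as the paper: both build a candidate bisimulation out of the pairs witnessing $\bisim{x\cdot r}{y\cdot r}$ and use the two $\mathrm{congr}$ premises to rule out the cross-matchings in which exactly one left factor terminates. The only nit is that your $R'$ must also contain the pair $(\surd,\surd)$ (as the paper's does), since matched terminating steps $x\step{a}\surd$ and $y\step{a}\surd$ must land on a pair inside the relation; ``needs no follow-up'' is not quite accurate under the formal definition.
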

\begin{proof}
Assume $R$ is a witnessing relation for $\bisim{p\cdot r}{q\cdot r}$
and define $R'$ as follows:
\begin{center}
\begin{math}
R'=\{(\surd,\surd)\}\cup\{(p,q)\}\cup\{(p',q')\mid (p'\cdot r,q'\cdot r)\in R,\,
  p\longrightarrow^+p',\,q\longrightarrow^+q'\}
\end{math}
\end{center}
Assume there exists $(p'\cdot r,q'\cdot r)\in R$ such that $(p',q')\in R'$ for some
$p',q'\in T$. If there exists a step $p'\cdot r\step{a}p''\cdot r$ for $p''\in T$
and a step $q'\cdot r\step{a}r$ (i.e. when $q'\step{a}\surd$) such that
$\bisim{p''\cdot r}{r}$ then this contradicts $\congr{p}{r}$. The argument is 
symmetric.
\end{proof}
The congruence property may be used to recursively define a normal form,
thereby making the notion of subterm with regard to the congruence property 
more precise. Two remarks are important now.

First, suppose we have a term $(\binstar{p}{q})\cdot(\binstar{r}{s})$ then 
two properties are desired:
\begin{enumerate}
\item for $p\longrightarrow^+t$ and $p\longrightarrow^+u$ we have:
      $\bisim{t\cdot\binstar{p}{q}\cdot\binstar{r}{s}}{u\cdot\binstar{p}{q}\cdot\binstar{r}{s}}$
      implies $\bisim{t}{u}$ and
\item for $r\longrightarrow^+x$ and $r\longrightarrow^+y$ we have:
      $\bisim{x\cdot\binstar{r}{s}}{y\cdot\binstar{r}{s}}$ implies $\bisim{x}{y}$.
\end{enumerate}
Therefore, the cases for $\binstar{p}{q}$ and $\binstar{r}{s}$ cannot be
separated into two conditions relying solely on $\binstar{p}{q}$ and
$\binstar{r}{s}$. This is resolved by using a binary predicate to express
the fact that a term is normalized.

In general, the process algebra $T$ does not have a neutral element under sequential 
composition. This necessitates the definition of two slightly different
predicates for normal forms. Although this makes the proof a slightly more
difficult, this does not present a fundamental complication.

We define two normal form predicates $\mathrm{nfmult}$ and $\mathrm{nf}$ recursively
as shown below. Note that there is no mutual dependence between $\mathrm{nf}$ and
$\mathrm{nfmult}$.
\begin{center}
\begin{tabular}{rcl}
$\nfmult{0}{q}$ & $\iff$ & $\mathit{true}$ \\
$\nfmult{a}{q}$ & $\iff$ & $\mathit{true}$ \\
$\nfmult{p_1+p_2}{q}$ & $\iff$ & $\nfmult{p_1}{q}\,\,\textrm{and}\,\,\nfmult{p_2}{q}$ \\
$\nfmult{p_1\cdot p_2}{q}$ & $\iff$ & $\nfmult{p_1}{p_2\cdot q}\,\,\textrm{and}\,\,\nfmult{p_2}{q}$ \\
$\nfmult{\binstar{p_1}{p_2}}{q}$ & $\iff$ & $\nfmult{p_1}{\binstar{p_1}{p_2}\cdot q}\,\,\textrm{and}\,\,
  \nfmult{p_2}{q}\,\,\textrm{and}$ \\
& & $\congr{p_1}{\binstar{p_1}{p_2}\cdot q}$ \\
\end{tabular}
\end{center}
and
\begin{center}
\begin{tabular}{rcl}
$\nf{0}$ & $\iff$ & $\mathit{true}$ \\
$\nf{a}$ & $\iff$ & $\mathit{true}$ \\
$\nf{p_1+p_2}$ & $\iff$ & $\nf{p_1}\,\,\textrm{and}\,\,\nf{p_2}$ \\
$\nf{p_1\cdot p_2}$ & $\iff$ & $\nfmult{p_1}{p_2}\,\,\textrm{and}\,\,\nf{p_2}$ \\
$\nf{\binstar{p_1}{p_2}}$ & $\iff$ & $\nfmult{p_1}{\binstar{p_1}{p_2}}\,\,\textrm{and}\,\,
  \nf{p_2}\,\,\textrm{and}$ \\
& & $\congr{p_1}{\binstar{p_1}{p_2}}$ \\
\end{tabular}
\end{center}
We consider several simple examples. Observe that $\nf{\binstar{(a\cdot b+a)}{0}}$ 
holds because there does not exist a reduction sequence $a\cdot b+a\longrightarrow^+t$
such that $\bisim{t\cdot\binstar{(a\cdot b+a)}{0}}{\binstar{(a\cdot b+a)}{0}}$. For
the term $\binstar{(a\cdot\binstar{a}{a})}{0}$, such a reduct $a\cdot\binstar{a}{a}
\longrightarrow^+\binstar{a}{a}$ indeed exists. However, the bisimulant 
$\binstar{(a+a)}{0}$ of $\binstar{(a\cdot\binstar{a}{a})}{0}$ is normalized
and constructed as such in Lemma \ref{lem:congr_mult} and Lemma \ref{lem:congr_ex}.

The term $\binstar{(a\cdot\binstar{(a\cdot(b\cdot a+a))}{c})}{0}$ is an example originally 
proposed in \cite{fok97} that is re-used in the recent result of Fokkink and Grabmayer \cite{gf20}. For
compactness we abbreviate these as $\binstar{q}{0}\equiv\binstar{(a\cdot\binstar{(a\cdot(b\cdot a+a))}{c})}{0}$ 
and $p\equiv a\cdot(b\cdot a+a)$ such that $\binstar{q}{0}\equiv\binstar{(a\cdot\binstar{p}{c})}{0}$.
Now observe that $p\longrightarrow^+a$ such that 
$\bisim{(a\cdot\binstar{p}{c})\cdot\binstar{q}{0}}{\binstar{q}{0}}$. In
this case, we can obtain a normalized bisimulant
$\bisim{\binstar{q}{0}}{a\cdot\binstar{(p+c\cdot a)}{0}}$. The setup of Lemma 
\ref{lem:congr_mult} and Lemma \ref{lem:congr_ex} is precisely tailored to 
construct normal forms for these types of cases.

We now prove two straightforward results concerning the normal form 
predicates.

\begin{lemma}
\label{lem:right_compat}
Both $\mathrm{nfmult}$ and $\mathrm{congr}$ are right-compatible
under bisimilarity.
\end{lemma}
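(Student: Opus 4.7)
My plan is to unfold the claim ``right-compatible under bisimilarity'' as: if $\bisim{q}{q'}$, then $\congr{p}{q}\iff\congr{p}{q'}$ and $\nfmult{p}{q}\iff\nfmult{p}{q'}$ for every $p$. The key enabling fact is that bisimilarity is a congruence for sequential composition on the right, i.e.\ $\bisim{q}{q'}$ implies $\bisim{t\cdot q}{t\cdot q'}$ for every $t\in T$. This is a standard observation: take the witnessing relation $R$ for $\bisim{q}{q'}$ and extend it to $\{(t\cdot q,t\cdot q')\mid t\in T\}\cup R\cup\{(\surd,\surd)\}$; checking the bisimulation conditions is a short case analysis on the rules for $\cdot$. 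I would state this as an auxiliary observation (or cite it as standard from the soundness discussion) before beginning the main induction.

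Given that observation, the $\congr$ case is almost immediate. Suppose $\congr{p}{q}$ and $\bisim{q}{q'}$. If $\congr{p}{q'}$ failed, there would exist $p\longrightarrow^+t$ with $\bisim{t\cdot q'}{q'}$. Using the right-congruence fact we get $\bisim{t\cdot q}{t\cdot q'}$, and combining with $\bisim{q'}{q}$ by transitivity yields $\bisim{t\cdot q}{q}$, contradicting $\congr{p}{q}$. The reverse direction is symmetric.

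For $\nfmult$, I would proceed by structural induction on $p$, again assuming $\bisim{q}{q'}$. The base cases $p\equiv 0$ and $p\equiv a$ are trivial since both sides are $\mathit{true}$, and $p\equiv p_1+p_2$ follows directly from the inductive hypothesis. For $p\equiv p_1\cdot p_2$, note that $\bisim{p_2\cdot q}{p_2\cdot q'}$ by right-congruence, so the inductive hypothesis on $p_1$ gives $\nfmult{p_1}{p_2\cdot q}\iff\nfmult{p_1}{p_2\cdot q'}$, and the inductive hypothesis on $p_2$ handles the second conjunct. The $p\equiv\binstar{p_1}{p_2}$ case is analogous, using $\bisim{\binstar{p_1}{p_2}\cdot q}{\binstar{p_1}{p_2}\cdot q'}$ to transport both the $\nfmult{p_1}{\binstar{p_1}{p_2}\cdot q}$ conjunct (by IH on $p_1$) and the $\congr{p_1}{\binstar{p_1}{p_2}\cdot q}$ conjunct (by the $\congr$ case already established).

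I do not expect a real obstacle here: the proof is a routine structural induction whose only non-mechanical ingredient is the right-congruence of bisimilarity with respect to $\cdot$. The mildly delicate point is simply to keep track of which bisimilar pairs the induction hypothesis is being applied to, especially in the Kleene-star case, where the second argument of $\nfmult$ on the recursive call is $\binstar{p_1}{p_2}\cdot q$ rather than $q$ itself; the right-congruence lemma is exactly what bridges this gap.
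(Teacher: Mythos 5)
Your proposal is correct and follows essentially the same route as the paper: a structural induction on $p$ (generalizing the second argument) for $\mathrm{nfmult}$, and a direct unfolding of the definition for $\mathrm{congr}$. The only difference is that you make explicit the auxiliary fact that $\bisim{q}{q'}$ implies $\bisim{t\cdot q}{t\cdot q'}$, which the paper leaves implicit; this is a sound and standard step.
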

\begin{proof}
For the first result, use induction towards $p$ to prove that 
$\nfmult{p}{r}$ is a consequence of $\nfmult{p}{q}$, given $\bisim{q}{r}$.
A similar result for $\mathrm{congr}$ follows directly from the definition.
\end{proof}

\begin{lemma}
\label{lem:preserved}
Both $\mathrm{nf}$ and $\mathrm{nfmult}$ are preserved under $\longrightarrow$.
\end{lemma}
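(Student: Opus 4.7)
The plan is to prove both preservation claims simultaneously by structural induction on $p$, handling $\mathrm{nfmult}$ first with the second argument $q$ universally quantified, and then deducing preservation of $\mathrm{nf}$ using the $\mathrm{nfmult}$ result (this is possible since the definitions are stratified, not mutually recursive). The overall skeleton is a straightforward case analysis that mirrors the transition rules: whenever $p \step{a} p'$ with $p' \in T$, we decompose the normal form hypothesis on $p$ according to its head operator and reassemble it on $p'$.

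The base cases $p \equiv 0$ and $p \equiv a$ are vacuous since neither admits a transition to a term in $T$ (only $a \step{a} \surd$). For $p \equiv p_1 + p_2$, any transition descends into $p_1$ or $p_2$, and since $\nfmult{p_1+p_2}{q}$ (resp.\ $\nf{p_1+p_2}$) directly provides the normal-form property for each summand, the induction hypothesis applies. For $p \equiv p_1 \cdot p_2$, a transition either gives $v \equiv p_1' \cdot p_2$ with $p_1 \step{a} p_1'$, in which case $\nfmult{p_1' \cdot p_2}{q}$ unfolds to $\nfmult{p_1'}{p_2 \cdot q}\wedge\nfmult{p_2}{q}$ and both conjuncts follow from the hypothesis (the first by induction on $p_1$ with parameter $p_2 \cdot q$); or $v \equiv p_2$ when $p_1 \step{a} \surd$, and here the required clause on $p_2$ is given directly.

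The interesting case is $p \equiv \binstar{p_1}{p_2}$, with three transition shapes: (a) $v \equiv p_1' \cdot \binstar{p_1}{p_2}$, (b) $v \equiv \binstar{p_1}{p_2}$, and (c) $v$ arising from a transition of $p_2$. In (a), I need $\nfmult{p_1'}{\binstar{p_1}{p_2}\cdot q}$ and $\nfmult{\binstar{p_1}{p_2}}{q}$; the first follows by the induction hypothesis applied to $p_1$ (with parameter $\binstar{p_1}{p_2}\cdot q$) using the hypothesis $\nfmult{p_1}{\binstar{p_1}{p_2}\cdot q}$, and the second is just the outer hypothesis itself. Subcase (b) reduces to the outer hypothesis. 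Subcase (c) is the induction hypothesis on $p_2$. The $\mathrm{congr}$ clause never needs to be re-proved because the subterm $\binstar{p_1}{p_2}$ appears literally inside the reduct, so the clause is inherited verbatim; Lemma~\ref{lem:right_compat} is not even needed here since no bisimilarity rewriting happens.

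The analogous induction for $\mathrm{nf}$ proceeds identically, but in the product and star cases one uses the already-established $\mathrm{nfmult}$ preservation to handle the $\nfmult{p_1'}{\cdots}$ conjunct of the reduct, while the $\nf{\cdots}$ conjunct comes directly from the hypothesis or from the $\mathrm{nf}$ induction hypothesis on $p_2$. I do not anticipate a genuine obstacle; the only point requiring care is bookkeeping in case (a) of the star, where one must correctly instantiate the second argument of $\mathrm{nfmult}$ in the inductive call so that the subterm $\binstar{p_1}{p_2}$ reappears on the right of $p_1'$ exactly as required by the definition of $\mathrm{nfmult}$ on a product.
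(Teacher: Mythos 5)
Your proposal is correct and follows essentially the same route as the paper: a structural induction on $p$ for each predicate, with the second argument of $\mathrm{nfmult}$ universally quantified so that the inductive call can be instantiated at $p_2\cdot q$ (resp.\ $\binstar{p_1}{p_2}\cdot q$) in the product and star cases, and with $\mathrm{nf}$-preservation then reusing the $\mathrm{nfmult}$ result. Your observation that the $\mathrm{congr}$ clause is inherited verbatim because the star subterm reappears literally in the reduct is exactly the reason the paper's one-line proof goes through.
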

\begin{proof}
Use induction towards $p$ to to prove that $\nf{q}$ is a consequence of
$\nf{p}$ and $p\step{a}q$, for some $a\in A$. Similarly, induction towards
$p$ can be applied to derive $\nfmult{q}{r}$ from $\nfmult{p}{r}$ if
$p\step{a}q$ for some $a\in A$ and $r\in T$.
\end{proof}

We are now ready to prove the two key lemmas for deriving a bisimilar term 
satisfying the congruence property. Note that Lemma \ref{lem:congr_mult} is
the first point in the proof where we will use induction towards the star-depth
$d$ which is defined straightforwardly as shown below.
\begin{center}
\begin{tabular}{rcl}
$d(0)=d(a)$ & = & 0 \\
$d(p_1+p_2)=d(p_1\cdot p_2)$ & = & $\max\{d(p_1),d(p_2)\}$ \\
$d(\binstar{p_1}{p_2})$ & = & $\max\{1+d(p_1),d(p_2)\}$ \\
\end{tabular}
\end{center}
\begin{lemma}
\label{lem:congr_mult}
For all $p,q,r\in T$ such that $\nfmult{p\cdot q}{r}$ and 
$\congr{q}{r}$ at least one of the following always holds:
\begin{enumerate}
\item There exists an $s\in T$ such that $\bisim{p\cdot q\cdot r}{s\cdot r}$
      and $\nfmult{s}{r}$ and $\congr{s}{r}$ and $d(s)\leq d(p\cdot q)$, or
\item There exists an $s\in T$ such that $\bisim{r}{s\cdot 0}$ and 
      $\nf{s\cdot 0}$ and $d(s)\leq 1+d(p\cdot q)$. 
\end{enumerate}
\end{lemma}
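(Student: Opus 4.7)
The plan is to proceed by induction on the structure of $p$, with the Kleene star case being the main substantive one and the place where the star-depth induction on $d$ comes into play. For $p \equiv 0$, case~1 holds immediately with $s \equiv 0 \cdot q$ since $p \cdot q$ has no reductions. For $p \equiv a$, the only reductions of $a \cdot q$ enter $q$ and its reducts; by $\congr{q}{r}$ the only potential bad reduct is $t \equiv q$ itself, so case~1 is witnessed by $s \equiv a \cdot q$ when $\bisim{q \cdot r}{r}$ fails and by $s \equiv a$ otherwise.

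For $p \equiv p_1 + p_2$ I would split $\nfmult{p \cdot q}{r}$ into $\nfmult{p_i \cdot q}{r}$ and apply the IH to $(p_1,q,r)$ and $(p_2,q,r)$ in turn; any case~2 outcome propagates upward unchanged, while two case~1 outcomes combine via $s \equiv s_1 + s_2$, using axiom B4 together with the congruence of $\mathbin{\underline{\leftrightarrow}}$. For $p \equiv p_1 \cdot p_2$ I would reassociate with B5 and first apply the IH to $(p_2,q,r)$; in the case~1 branch, a second IH call on $(p_1, s', r)$ closes the case, with Lemma~\ref{lem:right_compat} supplying the normal-form and congruence premises across $\bisim{p_2 \cdot q \cdot r}{s' \cdot r}$.

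The substantive case is $p \equiv \binstar{p_1}{p_2}$. The hypothesis $\nfmult{p \cdot q}{r}$ supplies $\congr{p_1}{p \cdot q \cdot r}$, which precludes any bad reduct arising from an incomplete $p_1$-iteration; combined with $\congr{q}{r}$ this leaves only three possible forms for a $t$ with $\bisim{t \cdot r}{r}$: $t \equiv p_2' \cdot q$ for some $p_2 \longrightarrow^+ p_2'$, $t \equiv q$ (when $p_2$ can terminate), or $t \equiv p \cdot q$ revisited after a full $p_1$-iteration. If none occur, $\congr{p \cdot q}{r}$ holds and case~1 is witnessed by $s \equiv p \cdot q$. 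Otherwise I would construct a term $u$ built from the offending reducts and chosen so that $\bisim{u \cdot r}{r}$ while, crucially, no reduct of $u$ is itself bad; RSP (Lemma~\ref{lem:rsp_sound}) then gives $\bisim{r}{\binstar{u}{0}}$, and $s \equiv \binstar{u}{0}$ yields case~2 with $\bisim{r}{s \cdot 0}$ via BKS2 and B7, the depth bound $d(s) \leq 1 + d(p \cdot q)$ following from $d(u) \leq d(p \cdot q)$.

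The main obstacle is verifying $\nf{s \cdot 0}$ for this case~2 construction, which unfolds (after using Lemma~\ref{lem:right_compat} to replace $\binstar{u}{0} \cdot 0$ by the bisimilar $r$) to $\nfmult{u}{r}$ and $\congr{u}{r}$. The $\nfmult$ part follows from the given assumptions together with Lemma~\ref{lem:preserved}. The $\congr{u}{r}$ part is the delicate step: when $p \cdot q$ itself lies among the bad reducts, the induced bisimilarity $\bisim{r}{p \cdot q \cdot r}$ converts any hypothetical new loop through a $p_1$-reduct into a violation of the assumption $\congr{p_1}{p \cdot q \cdot r}$, which is the strongest single piece of information furnished by the normal form predicate. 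The star-depth induction is what permits choosing $u$ of the form $p_1 + (\text{reducts of } p_2 \cdot q)$ or a refinement thereof and appealing to the lemma at a subterm of strictly smaller $d$, thereby breaking any residual cyclicity in the construction.
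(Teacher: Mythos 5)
Your base cases and the $p\equiv p_1+p_2$, $p\equiv p_1\cdot p_2$ cases track the paper's proof closely (the paper likewise chooses $s\equiv a$ or $s\equiv a\cdot q$ according to whether $\bisim{q\cdot r}{r}$, combines the two IH witnesses by $+$, and chains two IH calls for sequential composition). The gap is in the star case, and it is exactly at the point you identify as "delicate." You claim that $\congr{p_1}{\binstar{p_1}{p_2}\cdot q\cdot r}$ (supplied by $\nfmult{p\cdot q}{r}$) "precludes any bad reduct arising from an incomplete $p_1$-iteration." It does not: that hypothesis forbids $\bisim{t\cdot\binstar{p_1}{p_2}\cdot q\cdot r}{\binstar{p_1}{p_2}\cdot q\cdot r}$ for $p_1\longrightarrow^+t$, whereas the reduct $t\cdot\binstar{p_1}{p_2}\cdot q$ is bad for $\congr{p\cdot q}{r}$ when $\bisim{t\cdot\binstar{p_1}{p_2}\cdot q\cdot r}{r}$ --- a different condition, which the hypotheses do not exclude. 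Your transitivity argument converting such a loop into a violation of $\congr{p_1}{p\cdot q\cdot r}$ only works in the sub-case where $\bisim{p\cdot q\cdot r}{r}$ also holds; the genuinely hard situation is when it does not, which is precisely the paper's motivating example $\binstar{(a\cdot\binstar{(a\cdot(b\cdot a+a))}{c})}{0}$. Consequently your enumeration of the possible forms of bad reducts ($p_2'\cdot q$, $q$, or $p\cdot q$ itself) omits the one case that makes this lemma difficult.

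Because of that omission, the construction of the new loop body $u$ cannot have the shape you propose ($p_1$ plus reducts of $p_2\cdot q$, "or a refinement thereof"). The paper's resolution is a rotation of the loop: having normalized $p_2\cdot q$ to $s_2$ first, from a witness $p_1\longrightarrow^+t$ with $\bisim{t\cdot\binstar{p_1}{s_2}\cdot r}{r}$ it derives $\bisim{r}{t\cdot\binstar{(p_1+s_2\cdot t)}{0}}$ (the loop now starts at $t$ and the exit $s_2$ is re-entered via $s_2\cdot t$), then checks whether some $s_2\longrightarrow^+u$ produces a residual loop $\bisim{(u\cdot t)\cdot\binstar{(p_1+s_2\cdot t)}{0}}{\binstar{(p_1+s_2\cdot t)}{0}}$; if so it passes to $\binstar{(t\cdot u)}{0}$ and invokes the $d$-induction hypothesis, which is available because $d(t)<d(\binstar{p_1}{s_2})$. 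Your sentence "chosen so that $\bisim{u\cdot r}{r}$ while, crucially, no reduct of $u$ is itself bad" states the property that must be engineered rather than supplying the construction, so as written the proposal does not establish $\congr{\cdot}{\cdot}$ (equivalently $\nf{s\cdot 0}$) for the case-2 witness.
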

\begin{proof}
We first apply (strong) induction towards $d(p)$, thereby generalizing over
all variables, and subsequently induction towards the structure of $p$,
thereby generalizing over $q$ and $r$. Note that the case for $p\equiv 0$ 
can be solved directly by choosing $s\equiv 0$. If $p\equiv a$ for some $a\in A$ 
then we distinguish between two cases: (1) if $\bisim{q\cdot r}{r}$ then set 
$s\equiv a$, (2) otherwise set $s\equiv a\cdot q$. For the situations $p\equiv p_1+p_2$
and $p\equiv p_1\cdot p_2$, we first consider the cases where the $p$-induction
hypothesis for both operands corresponds with the first possibility in this lemma.
\begin{itemize}
\item If $p\equiv p_1+p_2$ then by induction we can derive $s_1,s_2\in T$ such
      that $\bisim{p_1\cdot q\cdot r}{s_1\cdot r}$ and $\bisim{p_2\cdot q\cdot r}{s_2\cdot r}$.
      Choose $s\equiv s_1+s_2$.
\item If $p\equiv p_1\cdot p_2$ then first use induction to derive $\bisim{p_2\cdot q\cdot r}{s_2\cdot r}$.
      Then, apply induction again setting $q\equiv s_2$ to derive $\bisim{p_1\cdot s_2\cdot r}{s\cdot r}$.
\end{itemize}
Note that for both the cases $p\equiv p_1+p_2$ and $p\equiv p_1\cdot p_2$, the
second case in the result is directly satisfied. 

If $p\equiv\binstar{p_1}{p_2}$ then first obtain $s_2\in T$ such
that $\bisim{p_2\cdot q\cdot r}{s_2\cdot r}$ via induction. Now first suppose
that $\bisim{\binstar{p_1}{s_2}\cdot r}{r}$ and observe that in this case we
have $\bisim{\bisim{r}{\binstar{(\binstar{p_1}{s_2})}{0}}}{\binstar{(p_1+s_2)}{0}}$.
Now $s\equiv\binstar{(p_1+s_2)}{0}$ is a witness for the second result. For
the remainder of this case, suppose that $\bisim{\binstar{p_1}{s_2}\cdot r}{r}$
does not hold.

Assume there exists a $p_1\longrightarrow^+t$ such that $\bisim{t\cdot\binstar{p_1}{s_2}\cdot r}{r}$
and observe that now $\bisim{r}{t\cdot\binstar{(p_1+s_2\cdot t)}{0}}$ holds. If there
exists a $s_2\longrightarrow^+u$ such that $\bisim{(u\cdot t)\cdot\binstar{(p_1+s_2\cdot t)}{0}}{\binstar{(p_1+s_2\cdot t)}{0}}$
then we have $\bisim{r}{\binstar{(t\cdot u)}{0}}$. As $d(t)<d(\binstar{p_1}{s_2})$ we
may use induction to obtain an $s\in T$ such that $\bisim{t\cdot u\cdot\binstar{(t\cdot u)}{0}}{s\cdot\binstar{(t\cdot u)}{0}}$,
which leads to a witness for the second result, otherwise, choose $s\equiv\binstar{t\cdot(p_1+s_2\cdot t)}{0}$ for
the second result.

If there does not exists a $p_1\longrightarrow^+t$ such that $\bisim{t\cdot\binstar{p_1}{s_2}\cdot r}{r}$
then $s\equiv\binstar{p_1}{s_2}$ can be chosen as a witness for the first result of the Lemma.
\end{proof}
Lemma \ref{lem:congr_ex} is very similar to Lemma \ref{lem:congr_mult}, but
it is required as a separate result due to the aforementioned absence of a
neutral element under sequential composition.
\begin{lemma}
\label{lem:congr_ex}
For all $p,r\in T$ such that $\nfmult{p}{r}$ 
at least one of the following always holds:
\begin{enumerate}
\item There exists a $q\in T$ such that $\bisim{p\cdot r}{q\cdot r}$
      and $\nfmult{q}{r}$ and $\congr{q}{r}$ and $d(q)\leq d(p)$, or
\item There exists a $q\in T$ such that $\bisim{r}{q\cdot 0}$ and 
      $\nf{q\cdot 0}$ and $d(q)\leq 1+d(p)$. 
\end{enumerate}
\end{lemma}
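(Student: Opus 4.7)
The plan is to mirror the proof of Lemma~\ref{lem:congr_mult} by structural induction on $p$ (with strong induction on $d(p)$ as the outer principle), leaning on Lemma~\ref{lem:congr_mult} itself whenever a genuine sequential composition appears to the left of $r$. The base cases $p\equiv 0$ and $p\equiv a$ are immediate: take $q\equiv 0$ and $q\equiv a$ respectively, noting that $\congr{a}{r}$ holds vacuously since $a$ admits no reduction of positive length within $T$. For $p\equiv p_1+p_2$, I would apply the induction hypothesis to each operand; if either returns a case-(2) witness we are done, and otherwise the two case-(1) witnesses $q_1,q_2$ combine to $q\equiv q_1+q_2$, with the required properties lifting componentwise from the clauses defining $\mathrm{nfmult}$ and $\mathrm{congr}$.

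For $p\equiv p_1\cdot p_2$ the main move is to first invoke the induction hypothesis on $p_2$ with the same $r$, obtaining either a case-(2) witness (returned directly) or a tail $q_2$ with $\bisim{p_2\cdot r}{q_2\cdot r}$, $\nfmult{q_2}{r}$, $\congr{q_2}{r}$ and $d(q_2)\leq d(p_2)$. In the latter situation I would then invoke Lemma~\ref{lem:congr_mult} on $p_1$ with $q_2$ playing the role of ``$q$'' and the same $r$; the premise $\nfmult{p_1\cdot q_2}{r}$ unfolds to $\nfmult{p_1}{q_2\cdot r}$ and $\nfmult{q_2}{r}$, where the first follows from $\nfmult{p_1}{p_2\cdot r}$ via Lemma~\ref{lem:right_compat} applied to $\bisim{p_2\cdot r}{q_2\cdot r}$. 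The conclusion of Lemma~\ref{lem:congr_mult} then supplies the witness required here.

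The Kleene star case $p\equiv\binstar{p_1}{p_2}$ will be the main obstacle, and I would reduce it to the corresponding analysis inside Lemma~\ref{lem:congr_mult}. Unfolding $\nfmult{\binstar{p_1}{p_2}}{r}$ yields $\nfmult{p_1}{\binstar{p_1}{p_2}\cdot r}$, $\nfmult{p_2}{r}$ and $\congr{p_1}{\binstar{p_1}{p_2}\cdot r}$. First apply the induction hypothesis to $p_2$ with $r$, producing (in the non-trivial subcase) some $s_2$ with $\bisim{p_2\cdot r}{s_2\cdot r}$ and $\congr{s_2}{r}$; using BKS2 under bisimilarity then gives $\bisim{\binstar{p_1}{p_2}\cdot r}{\binstar{p_1}{s_2}\cdot r}$. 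The three-way case split from Lemma~\ref{lem:congr_mult} transfers essentially verbatim: if $\bisim{\binstar{p_1}{s_2}\cdot r}{r}$ then case (2) is settled by $s\equiv\binstar{(p_1+s_2)}{0}$; otherwise, if some reduct $p_1\longrightarrow^+ t$ satisfies $\bisim{t\cdot\binstar{p_1}{s_2}\cdot r}{r}$, then the $\binstar{(p_1+s_2\cdot t)}{0}$ construction provides case (2), possibly after one further inductive application at the strictly smaller star-depth $d(t)<d(\binstar{p_1}{s_2})$ permitted by the outer induction on $d(p)$; and if no such reduct exists then $q\equiv\binstar{p_1}{s_2}$ satisfies case (1), with $\congr{\binstar{p_1}{s_2}}{r}$ obtained precisely from the failure of the second test.

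The principal technical difficulty will be bookkeeping the star-depth bounds ($d(q)\leq d(p)$ in case (1) and $d(q)\leq 1+d(p)$ in case (2)) and the preservation of $\mathrm{nfmult}$ and $\mathrm{congr}$ across the various constructions, particularly in the nested subcase where a second inductive call is made on a freshly introduced $\binstar{(t\cdot u)}{0}$. In each instance, however, this is a direct adaptation of work already carried out for Lemma~\ref{lem:congr_mult}; indeed, one may view Lemma~\ref{lem:congr_ex} as what Lemma~\ref{lem:congr_mult} would state in the presence of a multiplicative identity, and the separate statement is forced only by the absence of such a constant in $T$.
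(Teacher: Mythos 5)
Your proposal follows essentially the same route as the paper: structural induction on $p$, trivial base cases, componentwise treatment of $+$, reduction of the $p_1\cdot p_2$ case to Lemma~\ref{lem:congr_mult} (after first normalizing $p_2$ and transporting $\nfmult{p_1}{p_2\cdot r}$ along $\bisim{p_2\cdot r}{q_2\cdot r}$ via Lemma~\ref{lem:right_compat}), and a star case that replays the three-way split from Lemma~\ref{lem:congr_mult}. That is all correct and matches the paper's intent.

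The one point where your justification would not go through as written is the nested $s_2\cdot t$-subcase of the star case, which you propose to discharge by a recursive appeal to the outer strong induction on $d(p)$. The term that actually needs renormalizing there is (in the shape of this lemma) a product such as $t\cdot u$ with $u$ a reduct of $s_2$, and its depth is $\max\{d(t),d(u)\}$; while $d(t)\leq d(p_1)<d(\binstar{p_1}{p_2})$, one only has $d(u)\leq d(p_2)\leq d(\binstar{p_1}{p_2})$, and equality is possible (e.g.\ $p_1\equiv a$, $p_2\equiv\binstar{b}{c}$), so the measure need not strictly decrease and the recursive call is not licensed. The paper explicitly sidesteps this: no $d$-induction is set up in Lemma~\ref{lem:congr_ex} at all, and the $s_2\cdot t$-situation is handled by invoking the already-proven Lemma~\ref{lem:congr_mult} with first argument $t$ (whose own internal induction is on the depth of its first argument only, where $d(t)<d(\binstar{p_1}{s_2})$ does hold). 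Replacing your recursive call by that invocation repairs the argument and makes the outer induction on $d(p)$ superfluous.
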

\begin{proof}
By induction towards the structure of $p$. For the case $p\equiv p_1\cdot p_2$,
one simply invokes Lemma \ref{lem:congr_mult} directly. The cases for 
$p\equiv\binstar{p_1}{p_2}$ are almost the same. Note that induction towards
$d(p)$ is not required in this Lemma as one can use Lemma \ref{lem:congr_mult}
to handle the $s_2\cdot t$-situation for the case $p\equiv\binstar{p_1}{p_2}$. 
\end{proof}
In order to apply the derivation of a bisimilar term satisfying the congruence
property we first formulate a Lemma corresponding to the derivation of $\mathrm{nfmult}$,
followed by a lemma corresponding to the predicate $\mathrm{nf}$.
\begin{lemma}
\label{lem:nf_mult_ex}
For all $p,r\in T$, there exists a $q\in T$ such that
$\bisim{p\cdot r}{q\cdot r}$ and $\nfmult{q}{r}$ and $d(q)\leq d(p)$.
\end{lemma}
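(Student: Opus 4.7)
The plan is to proceed by structural induction on $p$, generalizing over $r$. The base cases ($p \equiv 0$ and $p \equiv a$) and the sum case $p \equiv p_1 + p_2$ are immediate from the definition of $\mathrm{nfmult}$. For $p \equiv p_1 \cdot p_2$, I would first apply the induction hypothesis to $p_2$ with context $r$ to obtain $q_2$, then again to $p_1$ with context $q_2 \cdot r$ to obtain $q_1$, and take $q \equiv q_1 \cdot q_2$; the required bisimilarity reduces to B5 together with congruence of bisimilarity.

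The interesting case is $p \equiv \binstar{p_1}{p_2}$. I would first normalize $p_2$ against $r$ to obtain $q_2$, so that BKS2 and congruence give $\bisim{\binstar{p_1}{p_2} \cdot r}{\binstar{p_1}{q_2} \cdot r}$. Then I apply the induction hypothesis to $p_1$ with the enlarged context $\binstar{p_1}{q_2} \cdot r$, obtaining $q_1'$ with $\nfmult{q_1'}{\binstar{p_1}{q_2} \cdot r}$ and $\bisim{p_1 \cdot \binstar{p_1}{q_2} \cdot r}{q_1' \cdot \binstar{p_1}{q_2} \cdot r}$. What is still missing for an $\mathrm{nfmult}$-judgement of a star expression is the $\mathrm{congr}$-clause, which is precisely what Lemma \ref{lem:congr_ex} is designed to supply; I would therefore apply it to $q_1'$ in the context $\binstar{p_1}{q_2} \cdot r$.

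Lemma \ref{lem:congr_ex} returns one of two outcomes. In the first, there is an $s$ with $\nfmult{s}{\binstar{p_1}{q_2} \cdot r}$, $\congr{s}{\binstar{p_1}{q_2} \cdot r}$ and $\bisim{q_1' \cdot \binstar{p_1}{q_2} \cdot r}{s \cdot \binstar{p_1}{q_2} \cdot r}$; I take $q \equiv \binstar{s}{q_2}$. Unfolding $\binstar{p_1}{q_2} \cdot r$ via BKS1, B4 and B5 and chaining the accumulated bisimilarities to replace $p_1$ by $s$ yields $\bisim{\binstar{p_1}{q_2} \cdot r}{s \cdot \binstar{p_1}{q_2} \cdot r + q_2 \cdot r}$, whence Lemma \ref{lem:rsp_sound} followed by BKS2 gives $\bisim{\binstar{p_1}{q_2} \cdot r}{\binstar{s}{q_2} \cdot r}$. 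Lemma \ref{lem:right_compat} then transports the $\mathrm{nfmult}$- and $\mathrm{congr}$-clauses across this bisimilarity, so $\nfmult{\binstar{s}{q_2}}{r}$ holds, and the depth bound is routine. In the second outcome, $\bisim{\binstar{p_1}{q_2} \cdot r}{s \cdot 0}$ and $\nf{s \cdot 0}$; I take $q \equiv s \cdot 0$, use B5 and B7 to verify $\bisim{q \cdot r}{s \cdot 0}$, and recover $\nfmult{s \cdot 0}{r}$ from $\nf{s \cdot 0}$ via Lemma \ref{lem:right_compat} applied with $\bisim{0}{0 \cdot r}$.

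The main obstacle will be the RSP application producing $\bisim{\binstar{p_1}{q_2} \cdot r}{\binstar{s}{q_2} \cdot r}$, since this is the step that actually replaces the iterated body $p_1$ by its congruence-ready surrogate $s$ under the star. Once that bisimilarity is established, the right-compatibility results of Lemma \ref{lem:right_compat} reduce what remains to routine bookkeeping of $\mathrm{nfmult}$- and $\mathrm{congr}$-clauses and of depth bounds.
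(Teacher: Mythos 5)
Your proposal is correct and follows essentially the same route as the paper: structural induction on $p$ with the star case handled by invoking Lemma \ref{lem:congr_ex} on the (inductively normalized) body, then using soundness of RSP together with BKS2 to obtain $\bisim{\binstar{p_1}{q_2}\cdot r}{\binstar{s}{q_2}\cdot r}$ and Lemma \ref{lem:right_compat} to transport the $\mathrm{nfmult}$- and $\mathrm{congr}$-clauses, with the $s\cdot 0$ outcome handled directly. You merely make explicit some steps the paper leaves implicit (normalizing $p_2$ first and establishing the $\mathrm{nfmult}$ premise of Lemma \ref{lem:congr_ex} via the induction hypothesis).
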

\begin{proof}
Apply induction towards the structure of $p$. For the case $p\equiv\binstar{p_1}{p_2}$,
use Lemma \ref{lem:congr_ex} to derive an $s\in T$ such that
$\bisim{s\cdot\binstar{p_1}{p_2}\cdot r}{p_1\cdot\binstar{p_1}{p_2}\cdot r}$,
which results in $\bisim{\binstar{s}{p_2}\cdot r}{\binstar{p_1}{p_2}\cdot r}$,
due to soundness of RSP, and $\congr{s}{\binstar{s}{p_2}\cdot r}$ if the
first result of Lemma \ref{lem:congr_ex} holds. Otherwise, in case of the
second result of Lemma \ref{lem:congr_ex}, $\nfmult{s\cdot 0}{r}$ is satisfied
directly.
\end{proof}
\begin{lemma}
\label{lem:normalization}
For all $p\in T$ there exists a $q\in T$ such that $\bisim{p}{q}$
and $\nf{q}$ and $d(p)\leq d(q)$.
\end{lemma}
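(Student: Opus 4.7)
The plan is to proceed by induction on the structure of $p$, threading Lemmas \ref{lem:nf_mult_ex}, \ref{lem:congr_ex}, \ref{lem:rsp_sound}, and \ref{lem:right_compat}. The base cases $p\equiv 0$ and $p\equiv a$ are immediate with $q\equiv p$. For $p\equiv p_1+p_2$, I would apply the inductive hypothesis to each summand to obtain $q_1,q_2$ and set $q\equiv q_1+q_2$; the bisimilarity, normal-form predicate, and depth condition are all inherited componentwise. For $p\equiv p_1\cdot p_2$, first normalize $p_2$ via the IH to some $q_2$ with $\nf{q_2}$, then invoke Lemma \ref{lem:nf_mult_ex} on $p_1$ with auxiliary parameter $r:=q_2$ to produce $q_1$ satisfying $\nfmult{q_1}{q_2}$ and $\bisim{p_1\cdot q_2}{q_1\cdot q_2}$, and take $q\equiv q_1\cdot q_2$, which is normalized by the inductive definition of $\mathrm{nf}$.

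The substantive case is $p\equiv\binstar{p_1}{p_2}$. I would first normalize $p_2$ via the IH to $q_2$, so that $\bisim{\binstar{p_1}{p_2}}{\binstar{p_1}{q_2}}$. Next apply Lemma \ref{lem:nf_mult_ex} to $p_1$ with auxiliary $r:=\binstar{p_1}{q_2}$ to obtain an intermediate $q_1'$ with $\nfmult{q_1'}{\binstar{p_1}{q_2}}$ and $\bisim{p_1\cdot\binstar{p_1}{q_2}}{q_1'\cdot\binstar{p_1}{q_2}}$. Feeding $q_1'$ into Lemma \ref{lem:congr_ex} with the same $r$, the first alternative delivers $q_1$ further satisfying $\congr{q_1}{\binstar{p_1}{q_2}}$, and combining the two bisimilarity steps with BKS1 yields $\bisim{\binstar{p_1}{q_2}}{q_1\cdot\binstar{p_1}{q_2}+q_2}$, whereupon Lemma \ref{lem:rsp_sound} supplies $\bisim{\binstar{p_1}{q_2}}{\binstar{q_1}{q_2}}$. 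Lemma \ref{lem:right_compat} then transports $\nfmult{q_1}{\cdot}$ and $\congr{q_1}{\cdot}$ across this bisimilarity, so $q\equiv\binstar{q_1}{q_2}$ satisfies $\nf{q}$. The second alternative of Lemma \ref{lem:congr_ex} already supplies a normalized term of the form $s\cdot 0$ bisimilar to $\binstar{p_1}{q_2}$, which is taken as $q$.

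The principal obstacle will be the orchestration just sketched for the star case, in particular the simultaneous transport of $\nfmult{q_1}{\cdot}$ and $\congr{q_1}{\cdot}$ from the reference $\binstar{p_1}{q_2}$ to $\binstar{q_1}{q_2}$ after RSP folds the unfolding equation; right-compatibility (Lemma \ref{lem:right_compat}) is essential here, which is why that lemma was proved beforehand. A secondary point meriting careful verification is the stated bound $d(p)\leq d(q)$: both Lemma \ref{lem:nf_mult_ex} and Lemma \ref{lem:congr_ex} produce witnesses with depth \emph{no greater} than the input, so the construction above naturally yields $d(q)\leq d(p)$; I would therefore check the star case in detail to confirm that the depths in fact coincide up to the bound required, noting that in the base, choice, and product cases the depths of $q$ and $p$ agree exactly.
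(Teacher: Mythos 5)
Your proposal is correct and follows essentially the same route as the paper's (very terse) proof: structural induction on $p$, with the star case handled by normalizing $p_2$, invoking Lemmas \ref{lem:nf_mult_ex} and \ref{lem:congr_ex}, folding via soundness of RSP, and transporting $\mathrm{nfmult}$ and $\mathrm{congr}$ with Lemma \ref{lem:right_compat}. Your observation that the construction actually yields $d(q)\leq d(p)$ (the inequality in the lemma statement appears to be reversed by a typo, given the bounds in Lemmas \ref{lem:nf_mult_ex} and \ref{lem:congr_ex} and the paper's stated aim of non-increased star depth) is also on point.
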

\begin{proof}
Apply induction towards the structure of $p$ and handle the case $p\equiv\binstar{p_1}{p_2}$
as indicated in Lemma \ref{lem:nf_mult_ex}.
\end{proof}
\section{Completeness}
\label{sec:completeness}
We will use the normal form obtained under bisimilarity in section \ref{sec:normalization}
to finish the completeness proof. This requires some administrative work but is not very deep. 
Clearly the implication $\bisim{p}{q}\Rightarrow p=q$ is our proof obligation. The
completeness proof uses induction towards $d(p)$. Lemma \ref{lem:normalization} can
be used to derive an $r\in T$ such that $\bisim{p}{\bisim{r}{q}}$ and such that
$\nf{r}$ holds. By transitivity, these two equalities may be solved separately. This
mainly comes down to two steps where deriving axiomatic equality is mostly done by
invocations of Lemma \ref{lem:next}.
\begin{enumerate}
\item Using Lemma \ref{lem:next} and induction to reduce to an equality of a form
      similar to $t\cdot\binstar{p}{q}=\binstar{r}{s}$ for $p\longrightarrow^+t$ or 
      $\binstar{p}{q}=\binstar{r}{s}$, which in turn can be reduced to an equality
      of the form $x\cdot\binstar{p}{q}=y\cdot\binstar{p}{q}$ for $p\longrightarrow^+x$
      and $p\longrightarrow^+y$ via the axiom RSP and Lemma \ref{lem:next}.
\item Using $\nf{\binstar{p}{q}}$ and application of Lemma \ref{lem:congruence} 
      to solve $x=y$ using the $d$-induction hypothesis for completeness with
      regard to $d(\binstar{p}{q})$.
\end{enumerate}
We require the definition of two more predicates to aid in compact formulation of
the succeeding lemmas. We say that a set $M\subseteq A\times V$ is
a tail of $p$ (notation: $\tail{p}{M}$) if there exists a $q\in T$ such
that $p\longrightarrow^*q$ and for all $(a,u)\in M$ it holds that $q\step{a}u$. In
addition, we say that a term is next-provable (notation: $\nextt{p}$) if for all
$v\in V$ and for all $p\step{a}u$ for $u\in V$ it holds that
$\bisim{u}{v}$ implies $\teq{u}{v}$. Using the $d$-induction hypothesis, we
can prove the following lemma.
\begin{lemma}
\label{lem:core}
For $M,N,K,L\subseteq A\times V$ and $p,q\in T$ such
that $\tail{p}{M}$, $\tail{p}{K}$, $\nextt{\sigma(N)}$, $\nextt{\sigma(L)}$ and
$\nf{\binstar{p}{q}}$ it holds that 
\begin{center}
\begin{math}
\bisim{\sigma(M)\cdot\binstar{p}{q}+\sigma(N)}{\sigma(K)\cdot\binstar{p}{q}+\sigma(L)}
\end{math}
\end{center}
implies 
\begin{center}
\begin{math}
\sigma(M)\cdot\binstar{p}{q}+\sigma(N)=\sigma(K)\cdot\binstar{p}{q}+\sigma(L)
\end{math}
\end{center}
\end{lemma}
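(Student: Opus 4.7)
The plan is to invoke Lemma~\ref{lem:next}, thereby reducing the axiomatic equality to matching each transition from one side with a transition from the other side whose target is related by $\mathrm{teq}$. By Lemma~\ref{lem:summation}, the transitions of $\sigma(M)\cdot\binstar{p}{q}+\sigma(N)$ are enumerated by $M$ and $N$: each $(a,u)\in M$ produces a step to $u\cdot\binstar{p}{q}$ (or to $\binstar{p}{q}$ if $u\equiv\surd$), while each $(a,v)\in N$ produces a step to $v$. Applying the bisimilarity hypothesis to each such transition yields a matching transition on the right-hand side whose source lies in either the $\sigma(K)\cdot\binstar{p}{q}$ component or the $\sigma(L)$ component.

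The analysis then splits along this division. A step originating from the $\sigma(N)$ part on the left is handled directly: $\nextt{\sigma(N)}$ converts the bisimilarity of its target with the matched target into $\mathrm{teq}$, which is exactly what Lemma~\ref{lem:next} requires. The symmetric situation on the right is resolved via $\nextt{\sigma(L)}$, and that same hypothesis also handles any left-hand step arising from $\sigma(M)\cdot\binstar{p}{q}$ that happens to be matched by a right-hand step from $\sigma(L)$. Finally, a left-hand step to $u\cdot\binstar{p}{q}$ with $u\in T$ and $(a,u)\in M$ (so $p\longrightarrow^+u$ by $\tail{p}{M}$) matched by a right-hand step to $\binstar{p}{q}$ from $(a,\surd)\in K$ would yield $\bisim{u\cdot\binstar{p}{q}}{\binstar{p}{q}}$, which directly contradicts $\congr{p}{\binstar{p}{q}}$ extracted from $\nf{\binstar{p}{q}}$; hence this subcase is impossible.

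The main obstacle, and the only subcase that exits the comfort of the $\mathrm{next}$ reductions, is the matching of a left-hand step to $u\cdot\binstar{p}{q}$ with a right-hand step to $w\cdot\binstar{p}{q}$, where both $u$ and $w$ are reducts of $p$. Here $\congr{p}{\binstar{p}{q}}$ extends (by composing reductions) to both $\congr{u}{\binstar{p}{q}}$ and $\congr{w}{\binstar{p}{q}}$, so Lemma~\ref{lem:congruence} yields $\bisim{u}{w}$. Since $d(u),d(w)\leq d(p)<d(\binstar{p}{q})$, I would invoke the outer induction hypothesis on star depth---available in the surrounding completeness argument indicated at the start of Section~\ref{sec:completeness}---to conclude $u=w$ axiomatically, so that $u\cdot\binstar{p}{q}=w\cdot\binstar{p}{q}$ supplies the required $\mathrm{teq}$. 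Treating the case for $(a,\surd)\in M$ and the symmetric transitions on the right-hand side analogously, and collecting all the matches, discharges the hypotheses of Lemma~\ref{lem:next} and yields the desired axiomatic equality.
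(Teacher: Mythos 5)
Your proposal is correct and follows essentially the same route as the paper's (much terser) proof: reduce to Lemma~\ref{lem:next}, discharge the $\sigma(N)$/$\sigma(L)$ steps via the $\mathrm{next}$ hypotheses, and discharge the $\sigma(M)$/$\sigma(K)$ steps via $\congr{p}{\binstar{p}{q}}$ (extracted from $\nf{\binstar{p}{q}}$), Lemma~\ref{lem:congruence}, and the ambient $d$-induction hypothesis. Your explicit treatment of the cross-cases and of the $\surd$-versus-reduct mismatch is a faithful elaboration of what the paper leaves implicit.
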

\begin{proof}
Apply Lemma \ref{lem:next} and note that the premises arising from steps
from $\sigma(N)$ or $\sigma(L)$ are directly resolved. Furthermore, apply
Lemma \ref{lem:congruence} and the induction hypothesis for every step arising 
from $\sigma(M)$ and $\sigma(K)$.
\end{proof}
Lemma \ref{lem:complete_split} is crucial in the transformation of the
equality towards application of Lemma \ref{lem:core}. We define the
predicate $\mathrm{obl}$ to formalize the proof obligation as follows. Say
that $\obl{p}{q}{r}$ holds true if and only if for all $M,N\subseteq
A\times V$ such that:
\begin{enumerate}
\item $\bisim{\sigma(M)\cdot\binstar{p}{q} + \sigma(N)}{r}$, 
\item $\tail{p}{M}$, 
\item $\nextt{\sigma(N)}$, 
\item $\nextt{q}$, 
\item $\nf{\binstar{p}{q}}$ and
\item for all $x,y\in T$ such that $d(x)<d(\binstar{p}{q})$
      we have $\bisim{x}{y}$ implies $x=y$,
\end{enumerate}
the conclusion $\sigma(M)\cdot\binstar{p}{q}+\sigma(N)=r$ follows.
\begin{lemma}
\label{lem:complete_split}
For all $p,q,r,s\in T$ it holds that $\obl{p}{q}{s}$ implies
$\obl{p}{q}{r\cdot s}$.
\end{lemma}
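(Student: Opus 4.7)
The plan is to proceed by induction on the structure of $r$. The case $r \equiv 0$ is handled by axiom B7, which rewrites $r \cdot s$ to $0$; the bisimilarity with $0$ then forces $M = N = \emptyset$, and the conclusion follows. For $r \equiv a$ with $a \in A$, the term $a \cdot s$ has the single transition $a \cdot s \step{a} s$, so by Lemma \ref{lem:next} it suffices to match every transition of $\sigma(M) \cdot \binstar{p}{q} + \sigma(N)$ to it under teq. Each LHS transition target is either $\binstar{p}{q}$, a term $u \cdot \binstar{p}{q}$ with $p \longrightarrow^+ u$ (obtained from $\tail{p}{M}$), or a target of $\sigma(N)$. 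For $\binstar{p}{q}$, I would invoke $\obl{p}{q}{s}$ with $M' = \{(a',v) : p \step{a'} v\}$ and $N' = \{(a',v) : q \step{a'} v\}$; then BKS1 and Lemma \ref{lem:summation} show $\sigma(M') \cdot \binstar{p}{q} + \sigma(N') = \binstar{p}{q}$, so $\obl$ yields $\binstar{p}{q} = s$. For $u \cdot \binstar{p}{q}$, apply $\obl{p}{q}{s}$ with $M'$ the transition set of $u$ and $N' = \emptyset$ (using B6 to match the form). The $\sigma(N)$-target case is resolved directly by $\nextt{\sigma(N)}$.

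For the inductive cases, I would unfold $r \cdot s$ using the appropriate axiom and invoke the induction hypothesis. For $r \equiv r_1 \cdot r_2$, axiom B5 gives $r \cdot s = r_1 \cdot (r_2 \cdot s)$; apply the hypothesis first with $r_2$ and base $s$ to obtain $\obl{p}{q}{r_2 \cdot s}$, then with $r_1$ and base $r_2 \cdot s$ to obtain $\obl{p}{q}{r_1 \cdot (r_2 \cdot s)}$, which is $\obl{p}{q}{r \cdot s}$. For $r \equiv r_1 + r_2$, axiom B4 gives $r \cdot s = r_1 \cdot s + r_2 \cdot s$, and Lemma \ref{lem:next} reduces the problem to transition matching: each LHS target is handled either by $\obl{p}{q}{s}$ (for those bisimilar to $s$) or by a further application of the present lemma providing $\obl{p}{q}{r_i' \cdot s}$ for transition targets $r_i'$ of $r_i$.

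The main obstacle is the Kleene star case $r \equiv \binstar{r_1}{r_2}$, where BKS2 gives $r \cdot s = \binstar{r_1}{(r_2 \cdot s)}$ and transitions of this term can lead back to $r \cdot s$ itself, so plain structural induction does not terminate. My plan is to apply RSP: first establish axiomatically that $\sigma(M) \cdot \binstar{p}{q} + \sigma(N) = r_1 \cdot (\sigma(M) \cdot \binstar{p}{q} + \sigma(N)) + r_2 \cdot s$ via Lemma \ref{lem:next}, using the induction hypothesis on $r_2$ (which supplies $\obl{p}{q}{r_2 \cdot s}$) together with the given $\obl{p}{q}{s}$ to discharge the individual target equalities; RSP then yields that the LHS equals $\binstar{r_1}{(r_2 \cdot s)} = r \cdot s$. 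Establishing this RSP precondition by careful transition matching is where the bulk of the administrative work concentrates.
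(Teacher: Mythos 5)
Your overall skeleton (structural induction on $r$, the base cases via Lemma \ref{lem:next} and $\obl{p}{q}{s}$, the immediate composition case via two applications of the induction hypothesis, and RSP for the star case) matches the paper, but there are two genuine gaps. First, in the case $r\equiv r_1+r_2$ you propose to discharge the transition-matching obligations by ``a further application of the present lemma providing $\obl{p}{q}{r_i'\cdot s}$ for transition targets $r_i'$ of $r_i$.'' A transition target $r_i'$ is a \emph{reduct}, not a structural subterm, of $r_i$ (e.g.\ $\binstar{b}{c}\step{b}\binstar{b}{c}$), so the structural induction hypothesis does not supply $\obl{p}{q}{r_i'\cdot s}$, and no well-founded descent on reducts is available because of loops through the star. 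The paper avoids this entirely: it splits $M=M_1\cup M_2$ and $N=N_1\cup N_2$ so that $\bisim{\sigma(M_i)\cdot\binstar{p}{q}+\sigma(N_i)}{r_i\cdot s}$, and then applies the induction hypothesis to $r_1$ and $r_2$ themselves, never to their reducts.

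Second, and more importantly, your star case omits the ingredient that makes it work. After the RSP reduction to $\sigma(M)\cdot\binstar{p}{q}+\sigma(N)=r_1\cdot(\sigma(M)\cdot\binstar{p}{q}+\sigma(N))+r_2\cdot s$, the summand $r_1\cdot(\cdots)$ produces targets $r_1'\cdot(\sigma(M)\cdot\binstar{p}{q}+\sigma(N))$ and, when $r_1\step{a}\surd$, the term $\sigma(M)\cdot\binstar{p}{q}+\sigma(N)$ itself. Neither $\obl{p}{q}{s}$ nor the induction hypothesis on $r_2$ touches these; what is needed is the induction hypothesis on $r_1$ instantiated with the \emph{new} base term $\sigma(M)\cdot\binstar{p}{q}+\sigma(N)$, and that instantiation requires the premise $\obl{p}{q}{\sigma(M)\cdot\binstar{p}{q}+\sigma(N)}$. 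This premise is exactly Lemma \ref{lem:core}, which is where $\nf{\binstar{p}{q}}$, Lemma \ref{lem:congruence} and the $d$-induction hypothesis are actually consumed. Without invoking it, the self-referential obligations in the RSP precondition cannot be discharged, so the ``administrative work'' you defer is in fact the mathematical core of the lemma.
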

\begin{proof}
Apply induction towards the structure of $r$, thereby generalizing over
all other variables. For $r\equiv 0$, use Lemma \ref{lem:next}. For
$r\equiv a$ for some $a\in A$, use Lemma \ref{lem:next} and
the premise $\obl{p}{q}{s}$. For $r\equiv r_1+r_2$, one must first derive
sets $M_1,M_2,N_1,N_2\subseteq A\times V$ such that:
$M=M_1\cup M_2$ and $N=N_1\cup N_2$ and
\begin{center}
\begin{math}
\bisim{\sigma(M_1)\cdot\binstar{p}{q}+\sigma(N_1)}{r_1\cdot s}\quad\textrm{and}\quad
\bisim{\sigma(M_2)\cdot\binstar{p}{q}+\sigma(N_2)}{r_2\cdot s}.
\end{math}
\end{center}
The result then follows from the respective induction hypotheses for
$r_1$ and $r_2$. Note that the case $r\equiv r_1\cdot r_2$ is immediate
due to the setup of this lemma. 

The remaining case is $r\equiv\binstar{r_1}{r_2}$. Apply the axiom RSP and note 
that the reversal of this axiom is sound, this leads to the following proof obligation
\begin{center}
\begin{math}
\sigma(M)\cdot\binstar{p}{q}+\sigma(N)=r_1\cdot(\sigma(M)\cdot\binstar{p}{q}+\sigma(N))+r_2\cdot s
\end{math}
\end{center}
One first derives four sets $M_1,M_2,N_1,N_2\subseteq
A\times V$ such that
\begin{center}
\begin{math}
\bisim{\sigma(M_1)\cdot\binstar{p}{q}+\sigma(N_1)}{r_1\cdot(\sigma(M)\cdot\binstar{p}{q}+\sigma(N)}
\end{math}
\end{center}
and
\begin{center}
\begin{math}
\bisim{\sigma(M_2)\cdot\binstar{p}{q}+\sigma(N_2)}{r_2\cdot s}.
\end{math}
\end{center}
These two equalities can be resolved via induction. The premise
$\obl{p}{q}{\sigma(M)\cdot\binstar{p}{q}+\sigma(N)}$ is a result
of Lemma \ref{lem:core}.
\end{proof}
The following lemma is the analog of Lemma \ref{lem:complete_split}
and only required due to the aforementioned absence of a neutral 
element under multiplication.
\begin{lemma}
\label{lem:complete}
For $p,q,r\in T$ and $M,N\subseteq A\times V$
such that $\bisim{\sigma(M)\cdot\binstar{p}{q}+\sigma(N)}{r}$, 
$\tail{p}{M}$, $\nextt{\sigma(N)}$, $\nextt{q}$ and $\nf{\binstar{p}{q}}$
we have $\sigma(M)\cdot\binstar{p}{q}+\sigma(N)=r$, given the $d$-induction
hypothesis with regard to $d(\binstar{p}{q})$. 
\end{lemma}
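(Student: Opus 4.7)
The plan is to establish the conclusion of Lemma~\ref{lem:complete} by proving that $\obl{p}{q}{r}$ holds for every $r\in T$, by structural induction on $r$, generalizing over $M$ and $N$ at each inductive step. The outer $d$-induction hypothesis regarding $d(\binstar{p}{q})$ is carried into each case and supplied to invocations of Lemma~\ref{lem:core} where needed. The lemma statement is then recovered by instantiating $\obl{p}{q}{r}$ with the specific sets provided.

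In the base cases $r\equiv 0$ and $r\equiv a$, Lemma~\ref{lem:next} applies directly: from the bisimulation premise together with $\tail{p}{M}$, $\nextt{\sigma(N)}$ and $\nf{\binstar{p}{q}}$, both sides reduce to the same finite collection of action steps and axiomatic equality follows via the summation calculus. For $r\equiv r_1+r_2$, first partition the witnessing sets to obtain $M_1,M_2,N_1,N_2\subseteq A\times V$ with $M=M_1\cup M_2$ and $N=N_1\cup N_2$ such that $\bisim{\sigma(M_i)\cdot\binstar{p}{q}+\sigma(N_i)}{r_i}$ for $i\in\{1,2\}$; the $\mathrm{tail}$ and $\mathrm{next}$ premises transfer to these smaller sets, and the two induction hypotheses combine via axioms B1--B3 to yield the desired equality.

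For $r\equiv r_1\cdot r_2$, first apply the induction hypothesis on $r_2$ to establish $\obl{p}{q}{r_2}$, then invoke Lemma~\ref{lem:complete_split} to lift this directly to $\obl{p}{q}{r_1\cdot r_2}$. This case is essentially free once the split lemma is available.

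The main obstacle is the case $r\equiv\binstar{r_1}{r_2}$. Here soundness of BKS1 yields $\bisim{\binstar{r_1}{r_2}}{r_1\cdot\binstar{r_1}{r_2}+r_2}$, so combined with the bisimulation premise we obtain $\bisim{\sigma(M)\cdot\binstar{p}{q}+\sigma(N)}{r_1\cdot(\sigma(M)\cdot\binstar{p}{q}+\sigma(N))+r_2}$. To derive axiomatic equality here, Lemma~\ref{lem:core} supplies $\obl{p}{q}{\sigma(M)\cdot\binstar{p}{q}+\sigma(N)}$ (this is where the outer $d$-induction hypothesis is consumed), from which Lemma~\ref{lem:complete_split} lifts to $\obl{p}{q}{r_1\cdot(\sigma(M)\cdot\binstar{p}{q}+\sigma(N))}$; combining with the induction hypothesis for $r_2$ after the usual sum-style partition produces the recursive equality $\sigma(M)\cdot\binstar{p}{q}+\sigma(N)=r_1\cdot(\sigma(M)\cdot\binstar{p}{q}+\sigma(N))+r_2$, and RSP then concludes $\sigma(M)\cdot\binstar{p}{q}+\sigma(N)=\binstar{r_1}{r_2}=r$. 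The delicate point is ensuring that all side conditions of Lemma~\ref{lem:core} (in particular $\nf{\binstar{p}{q}}$, the $\mathrm{tail}$ property on the left summand, and $\mathrm{next}$-provability on the right summand) are genuinely preserved through the partitioning into $r_1\cdot(\cdots)$ and $r_2$.
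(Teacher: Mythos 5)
Your proposal is correct and follows essentially the same route as the paper: structural induction on $r$ generalizing over the remaining data, with Lemma~\ref{lem:complete_split} invoked for the case $r\equiv r_1\cdot r_2$, and the star case handled via RSP together with Lemma~\ref{lem:core} exactly as in the proof of Lemma~\ref{lem:complete_split}. The paper's own proof is a one-line sketch of precisely this argument, so your fleshed-out version adds detail but no new ideas or deviations.
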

\begin{proof}
By induction towards the structure of $r$, thereby generalizing over
all other variables. One must use Lemma \ref{lem:complete_split} here
for the case $r\equiv r_1\cdot r_2$.
\end{proof}
Before the completeness proof can be finished, we require two lemmas to
work towards a star-reduct on one side of the equality. Once we have an
equality in this form, Lemma \ref{lem:summation} can be applied to invoke
Lemma \ref{lem:complete}.
\begin{lemma}
\label{lem:towards_split}
For all $p,q,r\in T$ such that $\bisim{p\cdot r}{q}$ and
$\nf{p\cdot r}$ and $\nextt{r}$ we have $p\cdot r=q$, under the
$d$-induction hypothesis with regard to $d(p\cdot r)$.
\end{lemma}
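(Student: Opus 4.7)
The plan is to proceed by induction on the structure of $p$, appealing to the main $d$-induction hypothesis only when reducing the Kleene-star case to Lemma \ref{lem:complete}. The base cases $p\equiv 0$ and $p\equiv a$ fall out of Lemma \ref{lem:next}: for $p\equiv 0$, axiom B7 gives $p\cdot r = 0$ and $\bisim{0}{q}$ forces $q$ to have no transitions, so Lemma \ref{lem:next} closes immediately; for $p\equiv a$, the unique transition $a\cdot r\step{a}r$ is matched by some $q\step{a}v'$ with $\bisim{r}{v'}$, and a second, inner application of Lemma \ref{lem:next} to $r = v'$ converts the bisimilarity at each $r$-successor into $\mathrm{teq}$ via $\nextt{r}$.

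For $p\equiv p_1+p_2$, I apply Lemma \ref{lem:summation} to $q$ to obtain $q=\sigma(K)$, split $K$ according to whether each element corresponds to a transition of $p_1\cdot r$ or of $p_2\cdot r$ under the given bisimulation, invoke the structural IH on $p_1$ and $p_2$ separately on the two halves, and close via B4. For $p\equiv p_1\cdot p_2$, I re-associate by B5 to $p_1\cdot(p_2\cdot r)$ and recurse structurally on the proper subterm $p_1$; the predicate $\nf{p_1\cdot(p_2\cdot r)}$ unpacks directly from $\nf{p\cdot r}$. For the crucial case $p\equiv\binstar{p_1}{p_2}$, BKS2 rewrites $p\cdot r$ as $\binstar{p_1}{(p_2\cdot r)}$; Lemma \ref{lem:summation} applied separately to $p_1$ and to $p_2\cdot r$ then yields sets $M$ and $N$ with $p\cdot r = \sigma(M)\cdot\binstar{p_1}{(p_2\cdot r)} + \sigma(N)$ via BKS1, and Lemma \ref{lem:complete} closes the goal. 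The normal-form condition $\nf{\binstar{p_1}{(p_2\cdot r)}}$ transports from $\nf{p\cdot r}$ through Lemma \ref{lem:right_compat} applied to the BKS2-bisimilarity, while $\tail{p_1}{M}$ is witnessed by $p_1\longrightarrow^* p_1$, and the $d$-induction hypothesis with regard to $d(\binstar{p_1}{(p_2\cdot r)}) = d(p\cdot r)$ is precisely what Lemma \ref{lem:complete} requires.

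The main obstacle is discharging $\nextt{p_2\cdot r}$ (equivalently $\nextt{\sigma(N)}$), needed both by the re-associated call in the $p_1\cdot p_2$ case and by the invocation of Lemma \ref{lem:complete} in the star case. Transitions $p_2\cdot r\step{a}r$ are absorbed directly by $\nextt{r}$ composed with Lemma \ref{lem:next}, but transitions $p_2\cdot r\step{a}p_2'\cdot r$ demand that $\bisim{p_2'\cdot r}{v}$ implies $p_2'\cdot r = v$, and here $p_2'$ is only a reduct of $p_2$, so neither structural induction on the subterms of $p$ nor the $d$-induction hypothesis is strictly smaller in general. I expect to dispatch this by combining Lemma \ref{lem:preserved}, which propagates $\nfmult{p_2}{r}$ along steps to $\nfmult{p_2'}{r}$, with an auxiliary inner structural induction on $p_2$ itself, using Lemma \ref{lem:summation} to enumerate its transition targets and applying the structural IH of the outer lemma at each. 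Equivalently, the $\nextt$-discharge can be folded directly into the outer Lemma \ref{lem:next}-matching on $p\cdot r = q$, where for every target of the form $p_2'\cdot r$ the structural IH on the proper subterm $p_2$ is immediately available and the preservation lemma supplies the required normal-form hypothesis.
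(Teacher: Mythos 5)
Your overall route matches the paper's in outline: a structural induction on $p$ whose star case rewrites $\binstar{p_1}{p_2}\cdot r$ via BKS2/BKS1 into the shape $\sigma(M)\cdot\binstar{p_1}{(p_2\cdot r)}+\sigma(N)$ and invokes Lemma \ref{lem:complete}, and you have correctly isolated the one delicate point, the discharge of $\nextt{p_2\cdot r}$. However, neither of the two resolutions you sketch for that point goes through as stated, and this is a genuine gap. Transition targets of $p_2$ are reducts, not structural subterms (e.g.\ $\binstar{(a\cdot b)}{c}\step{a}b\cdot\binstar{(a\cdot b)}{c}$), so ``applying the structural IH of the outer lemma at each'' transition target is not a legal induction step; and in your second variant the structural IH on the subterm $p_2$ only yields $\bisim{p_2\cdot r}{v}$ implies $p_2\cdot r=v$, which says nothing about the successors $p_2'\cdot r$ that $\nextt{p_2\cdot r}$ actually quantifies over.

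The fix, and this is exactly what the paper does, is to strengthen the induction invariant by one transition level. Instead of proving ``$\bisim{p\cdot r}{q}$ implies $p\cdot r=q$'' by induction on $p$, prove: for all $p\step{a}u$ and all $v\in V$, $\bisim{\mult{u}{r}}{v}$ implies $\teq{\mult{u}{r}}{v}$, where $\mult{\surd}{r}=r$ and $\mult{u}{r}=u\cdot r$ otherwise, by structural induction on $p$ generalizing over $r$. This statement is precisely $\nextt{p\cdot r}$, so in the case $p\equiv p_1\cdot p_2$ the troublesome premise $\nextt{p_2\cdot r}$ is literally the induction hypothesis for the proper subterm $p_2$, and the instance for $p_1$ at second argument $p_2\cdot r$ covers the successors $(p_1'\cdot p_2)\cdot r$ up to B5; the reducts never need an induction hypothesis of their own because they occur only in the conclusion, one step below the term the invariant speaks about. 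The star case is then handled exactly as you describe, with $\nextt{\sigma(N)}$ and the $\nextt{}$ side condition of Lemma \ref{lem:complete} again supplied by the hypothesis for $p_2$. Once this successor-level statement is in hand, the lemma itself follows from a single application of Lemma \ref{lem:next} to $\bisim{p\cdot r}{q}$, and your separate treatment of the cases $0$, $a$ and $p_1+p_2$ (with the summation-splitting of $q$) becomes unnecessary.
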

\begin{proof}
The most straightforward way to prove this lemma is a setup close
to Lemma \ref{lem:next}. For this purpose, define the helper
function $\mathrm{mult}:V\times T\rightarrow T$ as follows: $\mult{\surd}{q}=q$ and $\mult{p}{q}=
p\cdot q$ for $p\in T$. One may then prove: for all $p\step{a}u$ and $v\in V$:
$\bisim{\mult{u}{r}}{v}$ implies $\teq{\mult{u}{r}}{v}$ by induction
towards the structure of $p$. For the case $p\equiv\binstar{p_1}{p_2}$,
use Lemma \ref{lem:summation} to prepare in a suitable form for
application of Lemma \ref{lem:complete}.
\end{proof}
\begin{lemma}
\label{lem:towards}
For all $p,q\in T$ such that $\bisim{p}{q}$ and $\nf{p}$
it holds that $p=q$ given the induction hypothesis towards $d(p)$.
\end{lemma}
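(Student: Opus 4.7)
The plan is structural induction on $p$, invoking the given $d$-induction hypothesis whenever a reduct has strictly smaller star-depth than $p$. The base cases $p\equiv 0$ and $p\equiv a$ are dispatched by Lemma \ref{lem:next}: the former has no outgoing transitions, so $\bisim{0}{q}$ forces $q$'s transitions to be empty as well; the latter only steps to $\surd$, which bisimilarity matches uniquely. For $p\equiv p_1+p_2$, Lemma \ref{lem:next} again applies, and the required $\teq{p'}{q'}$ obligations for matching transitions from $q$ are discharged by the structural induction hypothesis, using Lemma \ref{lem:preserved} to carry the normal-form predicate through one step.

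For the sequential case $p\equiv p_1\cdot p_2$, I would invoke Lemma \ref{lem:towards_split} with $r:=p_2$. The premises $\bisim{p_1\cdot p_2}{q}$ and $\nf{p_1\cdot p_2}$ are already at hand, and the $d$-induction hypothesis is passed along. The only side-condition to derive is $\nextt{p_2}$, which follows from $\nf{p_2}$ (a consequence of $\nf{p_1\cdot p_2}$) via structural induction on $p_2$ combined with the preservation of normal forms under transitions (Lemma \ref{lem:preserved}).

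The crucial case is $p\equiv\binstar{p_1}{p_2}$. I would rewrite $p=p_1\cdot\binstar{p_1}{p_2}+p_2$ via BKS1, then apply Lemma \ref{lem:summation} to obtain finite sets $M,N\subseteq A\times V$ with $p_1=\sigma(M)$, $p_2=\sigma(N)$, where $M$ and $N$ record exactly the one-step transitions of $p_1$ and $p_2$. This yields $p=\sigma(M)\cdot\binstar{p_1}{p_2}+\sigma(N)$ axiomatically, with $\tail{p_1}{M}$ satisfied by the trivial reduction $p_1\longrightarrow^* p_1$. Lemma \ref{lem:complete} then closes the case: $\nf{\binstar{p_1}{p_2}}$ is our assumption, the $d$-induction premise follows from $d(p)=d(\binstar{p_1}{p_2})$, and the $\nextt{\sigma(N)}$ and $\nextt{p_2}$ side-conditions are obtained exactly as in the sequential case.

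The main obstacle I anticipate is discharging the next-provability conditions inside the star case: transitions internal to a Kleene star generally preserve the star-depth, so the outer $d$-induction cannot be invoked na\"ively on reducts of $p_2$ when $d(p_2)=d(p)$. Resolving this requires a layered argument---structural induction within a fixed $d$-level, with the $d$-induction used only when star-depth strictly decreases---supported throughout by Lemma \ref{lem:preserved} to ensure that normal forms survive every transition taken.
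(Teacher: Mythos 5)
Your overall strategy---dispatching each case with Lemma \ref{lem:next}, delegating the sequential case to Lemma \ref{lem:towards_split}, and handling $\binstar{p_1}{p_2}$ via BKS1, Lemma \ref{lem:summation} and Lemma \ref{lem:complete}---is the same route the paper takes (its proof is literally ``analogous to Lemma \ref{lem:towards_split}''). However, there is a genuine gap in how you set up the induction. You induct structurally on $p$ with the statement ``$\nf{p}$ and $\bisim{p}{q}$ imply $p=q$,'' but every obligation that Lemma \ref{lem:next} and Lemma \ref{lem:complete} hand back to you concerns the one-step \emph{reducts} of $p$'s subterms: the $\teq{p'}{q'}$ conditions in the plus case are about $p'$ with $p_1\step{a}p'$, and the side-conditions $\nextt{p_2}$ and $\nextt{\sigma(N)}$ are by definition statements about the reducts of $p_2$. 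A one-step reduct is in general neither a structural subterm of $p$ (e.g.\ $\binstar{p_1}{p_2}\step{a}p_1'\cdot\binstar{p_1}{p_2}$) nor of strictly smaller star-depth, so neither your structural induction hypothesis nor the ambient $d$-induction hypothesis applies to it. In particular, your claim that $\nextt{p_2}$ ``follows from $\nf{p_2}$ via structural induction on $p_2$ combined with Lemma \ref{lem:preserved}'' does not go through as stated: $\nextt{p_2}$ is precisely completeness for the reducts of $p_2$, which is what you are in the middle of proving. You do sense the problem in your final paragraph, but the ``layered argument'' you sketch does not resolve it, because the obstruction is not the $d$-level but the fact that reducts escape the subterm relation.

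The paper's fix is to change the induction statement itself: one proves, by structural induction on $p$, the next-provability property --- for all $p\step{a}u$ and all $v\in V$, $\bisim{\mult{u}{r}}{v}$ implies $\teq{\mult{u}{r}}{v}$, using the helper $\mult{\surd}{q}=q$ and $\mult{p}{q}=p\cdot q$ --- and only afterwards concludes $p=q$ by a single application of Lemma \ref{lem:next}. This statement is closed under the structural decomposition of $p$: the steps of $p_1+p_2$ are steps of $p_1$ or $p_2$, the steps of $p_1\cdot p_2$ are $\mult{u}{p_2}$ for steps $u$ of $p_1$ (so the hypothesis for $p_1$ is invoked with $r:=p_2\cdot r$, which is why the lemma must be generalized over $r$), and the side-condition $\nextt{p_2}$ in the star case becomes exactly the induction hypothesis for the subterm $p_2$. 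With that reformulation your case analysis, including the use of Lemma \ref{lem:towards_split} and of Lemma \ref{lem:complete} with $\tail{p_1}{M}$ witnessed by $p_1\longrightarrow^*p_1$, goes through essentially as you describe.
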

\begin{proof}
Analogous to Lemma \ref{lem:towards_split}.
\end{proof}
We may now prove the completeness theorem.
\begin{theorem}
\label{thm:completeness}
For all $p,q\in T$ such that $\bisim{p}{q}$ it holds that $p=q$.
\end{theorem}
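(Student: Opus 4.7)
The plan is to close the completeness proof by strong induction on the star-depth $d(p)$, using the normalized intermediary guaranteed by Lemma \ref{lem:normalization} together with Lemma \ref{lem:towards}. Concretely, given $\bisim{p}{q}$, I would invoke Lemma \ref{lem:normalization} on $p$ to obtain an $r \in T$ with $\bisim{p}{r}$, $\nf{r}$, and depth bounded appropriately (interpreting the depth bound in the normalization statement so that $d(r)$ is controlled by $d(p)$, which is what the underlying constructions in Lemmas \ref{lem:congr_mult}, \ref{lem:congr_ex}, and \ref{lem:nf_mult_ex} actually deliver). Then by transitivity and symmetry of bisimilarity (which follow routinely from the coinductive definition), I obtain both $\bisim{r}{p}$ and $\bisim{r}{q}$.

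Both of these can now be attacked with Lemma \ref{lem:towards}, since in each pair the left-hand side $r$ is in normal form. Applying Lemma \ref{lem:towards} to $\bisim{r}{p}$ yields $r = p$, and applying it to $\bisim{r}{q}$ yields $r = q$. Composing via symmetry and transitivity of axiomatic equality gives $p = r = q$, as required. The only side condition for each invocation of Lemma \ref{lem:towards} is the $d$-induction hypothesis relative to $d(r)$, which is supplied by our outer strong induction on $d(p)$, because $d(r)$ does not exceed $d(p)$ under the normalization bound, so any strict inequality $d(x) < d(r)$ entails $d(x) < d(p)$ and thus falls under the inductive hypothesis.

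The main obstacle, as I see it, is the careful accounting of the induction parameter. Lemma \ref{lem:towards} is stated in a form that already relies on a $d$-induction hypothesis, so the theorem's own induction is essentially what \emph{supplies} that hypothesis; one must therefore be vigilant that every recursive call to completeness inside Lemma \ref{lem:towards} (and hence inside Lemma \ref{lem:complete}, \ref{lem:complete_split}, \ref{lem:towards_split}, and ultimately \ref{lem:core}) occurs only on terms of strictly smaller star-depth than $d(\binstar{p_1}{p_2})$ for the relevant outermost star. Everything else, including transitivity of $\bisim{}{}$ on $V$ (needed to chain $\bisim{p}{r}$ and $\bisim{p}{q}$ into $\bisim{r}{q}$) and soundness of the reversal of RSP invoked along the way, is immediate from Lemma \ref{lem:soundness} and the definition. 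Once the depth bookkeeping is in place, the theorem follows in just a few lines.
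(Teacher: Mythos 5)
Your proposal is correct and follows essentially the same route as the paper: strong induction on $d(p)$, normalization of $p$ to an $r$ with $\nf{r}$ and $d(r)\leq d(p)$ via Lemma \ref{lem:normalization}, and two applications of Lemma \ref{lem:towards} to obtain $r=p$ and $r=q$, chained by symmetry and transitivity. Your observations about the depth bound in Lemma \ref{lem:normalization} (whose stated inequality is evidently transposed) and about the outer induction supplying the $d$-induction hypothesis required by Lemma \ref{lem:towards} are exactly the bookkeeping the paper's proof relies on.
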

\begin{proof}
Apply strong induction towards $d(p)$, thereby generalizing over $p$
and $q$. By Lemma \ref{lem:normalization} there exists an $r\in T$ 
such that $\bisim{p}{r}$ and $\nf{r}$ and $d(r)\leq d(p)$.
Lemma \ref{lem:towards} can now be applied to solve $r=p$ and $r=q$
separately.
\end{proof}
\section{Formalization}
\label{sec:formalization}
All the results in this paper have been formalized and verified using
version 8.4pl4 of the Coq proof-assistant (\cite{coq97}). The Coq code
is available at
\begin{center}
\texttt{https://github.com/allanvanhulst/mscs/}
\end{center}


We import Coq-libraries for lists, maximality natural numbers, and for Presburger arithmetic.
\begin{verbatim}
Require Import List.
Require Import Max.
Require Import Omega.
\end{verbatim}
We then encode the law of the excluded middle (LEM) as an additional
axiom. The addition of this axiom is consistent with Coq (cf. \cite{coq97}).
\begin{verbatim}
Axiom LEM : forall (P : Prop), P \/ ~ P.
\end{verbatim}
The set of actions $A$ is a parameter for the theory $T$.
\begin{verbatim}
Parameter A : Set.

Inductive T :=
  | zero : T
  | act : A -> T
  | plus : T -> T -> T
  | mult : T -> T -> T
  | star : T -> T -> T.
\end{verbatim}
Notations for 0, plus, multiplication and the binary Kleene-star are
subsequently introduced. The Unicode character $\cdot$ is used for 
sequential composition. Note that non-ASCII characters are valid
within the Coq proof assistant but may render in different ways on
different platforms. Moreover, the operators for plus and Kleene-star 
are overloaded and assigned a new meaning with regard to $T$ here. This 
is perfectly valid in Coq but due to parsing limitations, one cannot 
re-assign associativity or precedence to an already existing operator. 
However, this is not a problem here.
\begin{verbatim}
Notation "0" := zero.
Notation "p + q" := (plus p q).
Notation "p · q" := (mult p q) (at level 45, right associativity). 
Notation "p * q" := (star p q).
\end{verbatim}
We then define the set $V$ as an inductive construction
for either $\surd$ (i.e. \texttt{term}) or the constructor 
$\texttt{emb}:T\rightarrow V$. 
\begin{verbatim}
Inductive V :=
  | term : V
  | emb : T -> V.
\end{verbatim}
The transition relation $\longrightarrow\subseteq V\times
A\times V$ is then defined as a ternary predicate
having a corresponding notation $\texttt{p -(a)-> q}$.
\begin{verbatim}
Inductive step : V -> A -> V -> Prop := 
  | step_act : forall (a : A), step (emb (act a)) a term
  | step_plus_left : forall (p q : T) (v : V) (a : A), 
    step (emb p) a v -> step (emb (p + q)) a v
  | step_plus_right : forall (p q : T) (v : V) (a : A),
    step (emb q) a v -> step (emb (p + q)) a v
  | step_mult_left : forall (p q p' : T) (a : A),
    step (emb p) a (emb p') -> step (emb (p · q)) a (emb (p' · q))
  | step_mult_right : forall (p q : T) (a : A),
    step (emb p) a term -> step (emb (p · q)) a (emb q)
  | step_star_left : forall (p q p' : T) (a : A),
    step (emb p) a (emb p') -> step (emb (p * q)) a (emb (p' · (p * q)))
  | step_star_term : forall (p q : T) (a : A),
    step (emb p) a term -> step (emb (p * q)) a (emb (p * q))
  | step_star_right : forall (p q : T) (a : A) (v : V),
    step (emb q) a v -> step (emb (p * q)) a v.

Notation "p '-(' a ')->' q" := (step p a q) (at level 30).
\end{verbatim}
Bisimilarity then directly follows the definition in this paper.
\begin{verbatim}
Definition bisim (u v : V) : Prop := exists (R : V -> V -> Prop),
  R u v /\ forall (x y : V), R x y -> 
    (x = term <-> y = term) /\
    (forall (a : A) (x' : V), x -(a)-> x' ->
      exists (y' : V), y -(a)-> y' /\ R x' y') /\
    (forall (a : A) (y' : V), y -(a)-> y' ->
      exists (x' : V), x -(a)-> x' /\ R x' y').
\end{verbatim}
The simplest way to encode the axioms in Coq while having the most
possible certainty that no interference with the internal Coq-axioms
takes place is to inductively define axiomatic equality as a binary
predicate. This makes the proof somewhat cumbersome. For instance, 
every instance where transitivity or symmetry of the axiom system is 
applied needs to be specified explicitly. Essentially, we are encoding
the derivation of axiomatic equality as a proof-tree here.
\begin{verbatim}
Inductive ax : T -> T -> Prop :=
  | refl : forall (x : T), ax x x
  | symm : forall (x y : T), ax x y -> ax y x
  | trans : forall (x y z : T), ax x y -> ax y z -> ax x z
  | comp_plus : forall (w x y z : T), ax w y -> ax x z -> ax (w + x) (y + z)
  | comp_mult : forall (w x y z : T), ax w y -> ax x z -> ax (w · x) (y · z)
  | comp_star : forall (w x y z : T), ax w y -> ax x z -> ax (w * x) (y * z)
  | B1 : forall (x y : T), ax (x + y) (y + x)
  | B2 : forall (x y z : T), ax ((x + y) + z) (x + (y + z))
  | B3 : forall (x : T), ax (x + x) x
  | B4 : forall (x y z : T), ax ((x + y) · z) (x · z + y · z)
  | B5 : forall (x y z : T), ax ((x · y) · z) (x · (y · z))
  | B6 : forall (x : T), ax (x + 0) x
  | B7 : forall (x : T), ax (0 · x) 0
  | BKS1 : forall (x y : T), ax (x · (x * y) + y) (x * y)
  | BKS2 : forall (x y z : T), ax ((x * y) · z) (x * (y · z))
  | RSP : forall (x y z : T), ax x (y · x + z) -> ax x (y * z).
\end{verbatim}
The notation \texttt{<=>} is used for bisimilarity, and the 
notation  \texttt{==}  is used for axiomatic equality.
\begin{verbatim}
Notation "u '<=>' v" := (bisim u v) (at level 25).
Notation "p '==' q"  := (ax p q) (at level 25).
\end{verbatim}
This is all the Coq-code required to express the completeness theorem
at the very end of the Coq-proof.
\begin{verbatim}
Theorem completeness : forall (p q : T), emb p <=> emb q -> p == q.
\end{verbatim}
The Coq-proof in its present form consists of 3959 lines of code, 
and is divided into 73 lemmas and one theorem. Note that most of the
lemmas are very simple. An example is shown below.
\begin{verbatim}
Lemma step_plus_fmt : forall (p q : T) (a : A) (u : V),
  emb (p + q) -(a)-> u -> emb p -(a)-> u \/ emb q -(a)-> u.
Proof.
  intros p q a u H ; inversion H ; auto.
Qed.
\end{verbatim}
\section{Conclusions}
\label{sec:conclusions}
Compared to the proof method proposed in \cite{gf20}, the
completeness result presented in this work is simpler, although a rather
technical treatment was essentially unavoidable. The formalization in the
Coq proof assistant ensures guaranteed correctness of the result. The method
is not very distant to an algorithmic rewriting procedure and an actual
implementation may be constructed by reasonable effort.

The theory and axiom system originally considered by Milner \cite{mil84} allows for a
congruence property to be formulated in a very similar way as was done here.
Again, we require that for all $p\longrightarrow^+t$ and $p\longrightarrow^+u$
we have $\bisim{t\cdot p^*\cdot q}{u\cdot p^*\cdot q}$ implies $\bisim{t}{u}$.
This follows if for all $p\longrightarrow^+t$ the following two conditions are
satisfied
\begin{enumerate}
\item if $\bisim{t\cdot p^*\cdot q}{(t+1)\cdot p^*\cdot q}$ then $t\downarrow$ and
\item if $t\downarrow$ and $t\step{a}t'$ then there does not exist a step
      $p^*\cdot q\step{a}r$ such that $\bisim{t'\cdot p^*\cdot q}{r}$.
\end{enumerate}
It is not unlikely that the difficulty in obtaining such normal forms lies
almost entirely in ensuring that the first of these two conditions is met.
Assume that the predicate $\mathrm{nfmult}$ now expresses this new congruence
property recursively and further assume that the theory $T$ now has the unary 
Kleene star and 1. Resolving the following conjecture may be the key to finding 
a solution to Milner's question.
\begin{conjecture}
\label{con:milner}
For $p\in T$ there exists a $q\in T$ such that 
$\bisim{p}{q}$ and $\nfmult{q}{1}$ and $d(q)\leq d(p)$.
\end{conjecture}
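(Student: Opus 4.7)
The plan is to mimic the two-level induction structure used in Lemma \ref{lem:congr_mult} and Lemma \ref{lem:congr_ex}, but carried out for the unary-Kleene-star setting with the neutral element $1$. First, I would apply strong induction on $d(p)$, and within that a structural induction on $p$, generalizing over the ``context'' (here $1$, or more generally whatever right-factor the recursive definition of $\mathrm{nfmult}$ propagates). The base cases $p\equiv 0$, $p\equiv 1$, $p\equiv a$ should be immediate, and the cases $p\equiv p_1+p_2$ and $p\equiv p_1\cdot p_2$ should recurse, combining normalized operands just as in the corresponding bullet points of Lemma \ref{lem:congr_mult}.

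The interesting case is $p\equiv p_1^*$. Using the outer induction hypothesis, one first normalizes $p_1$, obtaining a bisimilar $s_1$ with $\nfmult{s_1}{s_1^*\cdot 1}$ and $d(s_1)\leq d(p_1)$. One then asks whether $s_1^*$ already satisfies the two congruence conditions from the concluding section. If yes, we are done. If not, there is a reduct $s_1\longrightarrow^+t$ witnessing failure of condition (1), i.e.\ $\bisim{t\cdot s_1^*\cdot 1}{(t+1)\cdot s_1^*\cdot 1}$ while $t\downarrow$ fails; in that case I would try to absorb $t$ into the star by a rewriting analogous to $\binstar{p_1}{p_2}\rightsquigarrow\binstar{(p_1+s_2\cdot t)}{0}$ in the 1-free proof, for instance replacing $s_1^*$ by $(s_1+t\cdot s_1)^*$ or a similar expression justified by RSP and BKS1. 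The outer induction on $d(p)$ supplies the smaller star-depth needed to re-apply the conjecture to the rewritten subterm and close the loop.

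The main obstacle, by far, is condition (1) of the concluding section: one must guarantee that whenever $\bisim{t\cdot p^*\cdot q}{(t+1)\cdot p^*\cdot q}$ holds, the reduct $t$ in fact terminates. The difficulty is that the presence of $1$ permits ``silent'' termination to be faked by the surrounding context: two processes that differ only in whether they offer immediate termination can become bisimilar once multiplied on the right by $p^*\cdot q$, since $p^*$ already supplies a termination option at every stage. Disentangling these two sources of termination — intrinsic ($t\downarrow$) versus context-induced ($p^*\cdot q$ absorbing it) — is precisely the phenomenon that makes Milner's problem hard, and is where I expect essentially all of the technical weight to sit.

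My expectation is that progress requires a secondary invariant beyond $d(p)$ — perhaps a measure counting the ``termination-sensitive'' reducts of $p_1$, or a refined normal form tracking which reducts $t$ of $p_1$ satisfy $t\downarrow$ — together with a careful use of the unary analogs of BKS1, BKS2 and RSP to realize the star-rewriting step without increasing $d$. Even assuming the proposal above works in principle, verifying that each rewriting step simultaneously preserves bisimilarity, maintains $d(q)\leq d(p)$, and strictly decreases the inductive measure is a delicate bookkeeping exercise, and I would not expect a first attempt to close cleanly without refining the formulation of $\mathrm{nfmult}$ itself.
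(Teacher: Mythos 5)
The first thing to note is that this statement is a \emph{conjecture}: the paper offers no proof of it, and explicitly frames it as a possible key to Milner's still-open completeness problem for the unary Kleene star with $1$. So there is no paper proof to compare against, and any complete argument here would be a genuinely new result. Your proposal does not claim to be one, and it is not: the base cases and the $+$/$\cdot$ cases are plausible by analogy with Lemma \ref{lem:congr_mult}, but the entire weight of the conjecture sits in the case $p\equiv p_1^*$, and there your argument is a plan rather than a proof. Concretely, the absorption step (``replacing $s_1^*$ by $(s_1+t\cdot s_1)^*$ or a similar expression justified by RSP and BKS1'') is not justified: in the 1-free setting the rewriting of $\binstar{p_1}{s_2}$ into $\binstar{(p_1+s_2\cdot t)}{0}$ is driven by an explicit bisimilarity $\bisim{r}{t\cdot\binstar{(p_1+s_2\cdot t)}{0}}$ extracted from the failure of $\mathrm{congr}$, whereas in the unary setting the failure mode is $\bisim{t\cdot p^*\cdot q}{(t+1)\cdot p^*\cdot q}$ while $t\downarrow$ fails, and you exhibit no analogous identity that simultaneously removes the offending reduct, preserves bisimilarity, and keeps $d$ from growing. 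The ``secondary invariant'' you invoke to make the induction terminate is left entirely unspecified, and you yourself concede that the formulation of $\mathrm{nfmult}$ may need to change --- at which point the statement being proved would no longer be the conjecture as stated.

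To your credit, your diagnosis of where the difficulty lies --- disentangling intrinsic termination $t\downarrow$ from termination absorbed by the surrounding context $p^*\cdot q$ --- coincides exactly with the paper's own remark that the difficulty ``lies almost entirely in ensuring that the first of these two conditions is met.'' But identifying the obstacle is not the same as overcoming it; as it stands, the proposal establishes nothing beyond what the paper already observes, and the conjecture remains open.
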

At this point, it cannot be excluded that Conjecture \ref{con:milner} can
be resolved using a proof method similar to the one applied in this paper.

\end{document}